\documentclass[showpacs,aps,floatfix,onecolumn,reprint]{revtex4-1}%
\usepackage{amssymb}
\usepackage{amsfonts}
\usepackage{amsmath}
\usepackage{graphicx}%
\usepackage{epstopdf}
\setcounter{MaxMatrixCols}{30}
\providecommand{\U}[1]{\protect\rule{.1in}{.1in}}
\providecommand{\U}[1]{\protect\rule{.1in}{.1in}}
\newtheorem{theorem}{Theorem}

\newtheorem{claim}[theorem]{Claim}

\newtheorem{lemma}[theorem]{Lemma}

\newenvironment{proof}[1][Proof]{\noindent\textbf{#1.} }{\ \rule{0.5em}{0.5em}}
\begin{document}
\preprint{cond-mat}
\title[Short title for running header]{Cusp singularities in boundary-driven diffusive systems}
\author{Guy Bunin}
\affiliation{Technion -- Israel Institute of Technology, Haifa 32000, Israel}
\author{Yariv Kafri}
\affiliation{Technion -- Israel Institute of Technology, Haifa 32000, Israel}
\author{Daniel Podolsky}
\affiliation{Technion -- Israel Institute of Technology, Haifa 32000, Israel}
\keywords{one two three}
\pacs{05.40.-a, 05.70.Ln, 5.10.Gg, 05.50.+q}

\begin{abstract}
Boundary driven diffusive systems describe a broad range of transport
phenomena. We study large deviations of the density profile in these systems,
using numerical and analytical methods. We find that the large deviation may
be non-differentiable, a phenomenon that is unique to non-equilibrium systems,
and discuss the types of models which display such singularities.\ The
structure of these singularities is found to generically be a cusp, which can
be described by a Landau free energy or, equivalently, by catastrophe theory.
Connections with analogous results in systems with finite-dimensional phase
spaces are drawn.

\end{abstract}
\volumeyear{year}
\volumenumber{number}
\issuenumber{number}
\eid{identifier}
\startpage{1}
\endpage{4}
\maketitle

\section{Introduction and framework}

The dynamics in many systems of physical interest are described by a field
$\rho(x,t)$, with large-scale conserving diffusive behavior and noise. For
example, $\rho(x,t)$ could describe the density of diffusing particles, the
local temperature in a heat transport experiment, or any other field which
behaves diffusively. For such systems, when the interactions are short range,
it is accepted
\cite{Derrida_review,spohn_book,JSP_quantum,BertiniPRL,TKL_long}, that the
large-scale behavior of the current obeys Fick's- (or Ohm's- or Fourier's-)
law with noise. Here our interest is in transport experiments, where the
system is attached to reservoirs, whose effect is to fix the value of $\rho$
at the boundaries, resulting in a net current flowing down the gradient.

For such systems the density $\rho\left(  x,t\right)  $ and the current
$J\left(  x,t\right)  $ satisfy the conservation relation%
\begin{equation}
\partial_{t}\rho+\partial_{x}J=0\ , \label{eq:conservation}%
\end{equation}
where%
\begin{equation}
J=\mathbf{-}D\left(  \rho\left(  x,t\right)  \right)  \partial_{x}\rho\left(
x,t\right)  +\sqrt{\sigma\left(  \rho\left(  x,t\right)  \right)  }\eta\left(
x,t\right)  \ . \label{eq:J_def}%
\end{equation}
Here $D\left(  \rho\left(  x,t\right)  \right)  $ is a density-dependent
diffusivity function, and $\sigma\left(  \rho\left(  x,t\right)  \right)  $
controls the amplitude of the white noise $\eta\left(  x,t\right)  $, which
satisfies $\left\langle \eta\left(  x,t\right)  \right\rangle =0$ and
$\left\langle \eta\left(  x,t\right)  \eta\left(  x^{\prime},t^{\prime
}\right)  \right\rangle =N^{-1}\delta\left(  x-x^{\prime}\right)
\delta\left(  t-t^{\prime}\right)  $. At temperatures well-above any
phase-transition, which we study here, $D\left(  \rho\right)  $ and
$\sigma\left(  \rho\right)  $ are smooth functions, and $D>0$. For simplicity
we consider one dimension, where the distance is rescaled by the system size
$N$, so that $0\leq x\leq1$, and time is rescaled by $N^{2}$. The small
$N^{-1}$ term in the noise is a direct consequence of this coarse-graining.
$D\left(  \rho\right)  $ and $\sigma\left(  \rho\right)  $ are related via a
fluctuation-dissipation relation, which for particle systems reads
$\sigma\left(  \rho\right)  =2k_{B}T\rho^{2}\kappa\left(  \rho\right)
D\left(  \rho\right)  $ where $\kappa\left(  \rho\right)  $ is the
compressibility \cite{Derrida_review}. The system is attached to reservoirs at
the boundaries $x=0,1$, which act as boundary conditions (BCs), $\rho\left(
x=0,t\right)  =\rho_{L}$ and $\rho\left(  x=1,t\right)  =\rho_{R}$.\ If
$\rho_{L}\neq\rho_{R}$ a current is induced through the system, driving it out
of equilibrium. For applications of Eq. (\ref{eq:J_def}) to transport
phenomena, including electronic systems, ionic conductors, and heat
conduction, see for example \cite{JSP_quantum,ionic_conductors,KMP}.

It is natural to ask for the probability of a density profile $\rho_{f}\left(
x\right)  $ in the steady-state. It is known that the probability distribution
assumes the form $P[\rho_{f}]\sim e^{-N\phi\lbrack\rho_{f}]}$, where
$\phi\lbrack\rho_{f}]$ is known as the large deviation functional (LDF), and
the $N$ prefactor is due to the small noise. $P[\rho_{f}]$ is the subject of
this work. As seen, out of equilibrium the LDF $\phi\left[  \rho_{f}\right]  $
plays the role of the free-energy density in equilibrium \cite{Derrida_review}%
. It is by now well established that, in contrast to equilibrium, out of
equilibrium long range correlations build-up
\cite{dorfman_long_range,SSEP_spohn} and the LDF is non-local
\cite{SSEP_spohn,SSEP_large_dev}. Moreover, there is now a general framework
for calculating $\phi\left[  \rho_{f}\right]  $. As detailed below it involves
finding the most probable history leading to $\rho_{f}$. In spite of this, the
properties of $\phi\left[  \rho_{f}\right]  $ remain poorly understood. Much
of what is known is based on a handful of exact solutions for specific models
\cite{SSEP_large_dev,BertiniJstat,KMP_large_dev}, and numerical techniques
\cite{our_numerics}. In particular, it was recently realized that $\phi\left[
\rho_{f}\right]  $ can be non-differentiable \cite{ours_short}.

Here we discuss in detail the occurrence and structure of such singular
behavior in the class of models defined above. We refer to it as a Large
Deviation Singularity (LDS). This is very different from the equilibrium case,
where smooth dynamics (i.e. when $D\left(  \rho\right)  $ and $\sigma\left(
\rho\right)  $ are smooth and $D>0$) lead to a smooth LDF $\phi\left[
\rho_{f}\right]  $.

The general occurrence of non-differentiable LDFs in low-noise Langevin
equations was discovered by Graham and T\'{e}l \cite{Graham_Tel}. LDSs were
consequently widely discussed in the literature
\cite{Noise_non_lin_book,analog_exp,Maier_Stein,Dykman,Dykman_cusp_structure},
demonstrated in experiments \cite{Noise_non_lin_book,analog_exp}, and shown to
affect quantities such as barrier crossing rates \cite{Maier_Stein}. In
addition to these, and more closely related to the present work, an LDS was
proven to exist in the asymmetric exclusion process \cite{ASEP_transition}, a
specific model of diffusing particles, where (unlike in the present
paper)\ the particles are also subject to an driving field in the bulk. This
is perhaps the first known microscopic lattice-gas model for which the
continuum limit was proven to feature an LDS. The proof hinges on the exact
solvability of the model, and it is not clear what other models of that family
will show this behavior. The general conditions for LDSs to occur in fields,
and the structure of the singularities remains largely unknown.

In this paper, we achieve the following.

(1)\ \emph{Existence of non-differentiable LDFs for boundary-driven
systems}\ -- we show that in some boundary-driven diffusive models the LDF is
non-differentiable. This includes an exactly solvable model, where $D=1$ and
$\sigma=\rho^{2}+1$, and a boundary-driven Ising model, with conserving
dynamics in the bulk. The phenomenon is general and robust, and expected to be
found in models where $\sigma\left(  \rho\right)  /D\left(  \rho\right)  $ has
a minimum which is deep enough. The profiles $\rho_{f}\left(  x\right)  $ at
which the derivative $\delta\phi/\delta\rho$ is discontinuous are found to
have a typical shape, as shown in Fig. \ref{fig:qs_cusp}(b) and
\ref{fig:two_paths_bdi}. The jump in the derivative is due to a change in the
form of the most probable history $\rho\left(  x,t\right)  $ leading up to
$\rho_{f}\left(  x\right)  $. It stems from the existence of regions in the
space of $\rho_{f}\left(  x\right)  $ where multiple locally minimizing
histories lead to a single $\rho_{f}\left(  x\right)  $. A short account of
these results was given in \cite{ours_short}.

(2) \emph{Structure of singularities in phase-space} -- we study the singular
structures in phase-space. Two-dimensional cross-sections, as shown in Fig.
\ref{fig:qs_cusp}(c) and \ref{fig:kls_cusp}(a), are illuminating. They show
the regions in the $\rho_{f}\left(  x\right)  $ space with multiple locally
minimizing histories. The boundaries of this region are known as
\emph{caustics}. The points where $\delta\phi/\delta\rho$ jumps occur when the
globally minimizing history changes. These points form the \emph{transition
line}. The transition line and caustics end at a single point, analogous to a
second-order phase transition. We show that the structure near this point is
similar to the one described by a Landau mean-field theory, or by a \emph{cusp
singularity} in catastrophe theory. One outcome of this theory is the
prediction that at the second-order-like point the probability $P[\rho_{f}]$
scales in a non-analytic way with the system size $N$. Specifically, instead
of the expected series $\ln P[\rho_{f}]=-N\phi\lbrack\rho_{f}]+O\left(
N^{0}\right)  $, the series will have an additional logarithmic correction
$\ln P[\rho_{f}]=-N\phi\lbrack\rho_{f}]+1/4\ln N+O\left(  N^{0}\right)  $. The
prefactor $1/4$ is universal, depending only on the symmetries of the systems.

(3)\ \emph{Relation to finite dimensional theory -- }we show that much of the
physics can be understood by introducing simple toy models with as low as two
degrees of freedom.

We stress that the singularities discussed here are different in nature from
those found for global quantities such as the current
\cite{Bertini_current_phase_trans,Bodineau,Merhav_kafri,hurtado_current_KMP}.
In those case the probability of a configuration can be smooth in phase-space,
but the optimizing configuration can change abruptly.

The paper is arranged as follows: In Sec. \ref{sec:GandTel} we present an
example of an LDS in a model with a single degree of freedom, as well as the
background to the general theory. In Sec. \ref{sec:LDS_exist} we demonstrate
the existence of LDSs in models of the family discussed here. We introduce the
two example models which are studied throughout the paper. We show,
analytically for one model and numerically for the other, that LDSs do indeed
exist, and indicate where and under what conditions they are expected. In Sec.
\ref{sec:cusp_structure} we study the structure of the region, and the effect
of this structure on the dependence of the probability $P[\rho_{f}]$ on the
system size. In Sec. \ref{sec:toy_model} we introduce a model with just two
phase-space dimensions which, as we show, captures much of the behavior of the
full infinite-dimensional field model. In Sec. \ref{sec:discussion} we
conclude and discuss future research directions.

\section{LDSs in simple models\label{sec:GandTel}}

Before discussing LDSs in the model defined above, we recall the simplest
example of such a phenomenon, which occurs for a single particle moving on a
ring. As originally discussed by Graham and T\'{e}l \cite{Graham_Tel}, when
such a system is driven out of equilibrium the gradient of the LDF becomes
discontinuous. It is instructive to see how this singularity arises, despite
the fact that many of the features are different from the singularities
discussed in this paper. \ 

Consider a particle moving on a one dimensional ring $x\in\left[  0,1\right]
$, subject to the Langevin equation%
\[
\frac{dx}{dt}=f_{0}-V^{\prime}\left(  x\right)  +\sqrt{\varepsilon}\eta\left(
t\right)  \ ,
\]
where $\left\langle \eta\left(  t\right)  \eta\left(  t^{\prime}\right)
\right\rangle =2\delta\left(  t-t^{\prime}\right)  $, $f_{0}$ is a constant,
$V\left(  x\right)  $ a periodic function on the ring, and $\varepsilon$ is a
small number which plays an analogous role to $N^{-1}$ in Eq. (\ref{eq:J_def}%
). For $f_{0}\neq0$, the total force $F\left(  x\right)  =f_{0}-V^{\prime
}\left(  x\right)  $ is not derivable from a potential. It is useful to define
the integral $U\left(  x\right)  =\int_{0}^{x}F\left(  x^{\prime}\right)
dx^{\prime}$ for $x\in\left[  0,1\right]  $ which is no longer periodic in
$x$. Consider the case with $U\left(  x\right)  $ shown in Fig.
\ref{fig:graham_tel_ring}.%

\begin{figure}
[ptb]
\begin{center}
\includegraphics[
height=1.0488in,
width=1.7113in
]%
{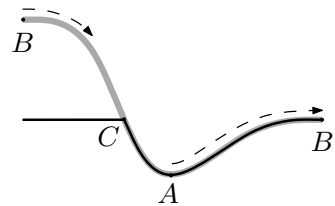}%
\caption{Simple model with singular LDF. The gray curve is $U\left(  x\right)
=-f_{0}x-V\left(  x\right)  $, and black curve is the LDF $\phi\left(
x\right)  $. The dashed arrows show the most probable path for a particle from
A, the local minimum of the potential, to a point between B and C.
$\phi^{\prime}\left(  x\right)  $ is discontinuous at point C.}%
\label{fig:graham_tel_ring}%
\end{center}
\end{figure}

As the noise is small (because of the $\varepsilon$ prefactor), the system
spends most of its time near point A. We now want to evaluate the probability
$P\left(  x\right)  \sim e^{-\varepsilon^{-1}\phi\left(  x\right)  }$\ of
reaching any other point, in the low noise limit to leading order in
$\varepsilon^{-1}$. The point B is represented both by $x=0$ and $x=1$.
However, due to the bias force it is easier\ to reach B by moving to the
right. Therefore the probability of reaching point B is given by the Arrenius
factor
\[
P\sim e^{-\varepsilon^{-1}\left[  U\left(  1\right)  -U\left(  A\right)
\right]  }=e^{-\varepsilon^{-1}\phi\left(  B\right)  }\ .
\]
To see why $\phi\left(  x\right)  $ is singular, note that once the particle
has reached this point it may \textquotedblleft roll-down\textquotedblright%
\ to reach all points. Therefore the probability of being between B and C is
equal to this order. Only below C is it preferable to move from point A to the
left, and the probability changes again. Thus for $f_{0}\neq0$ one obtains a
plateau and a discontinuity in $\phi^{\prime}\left(  x\right)  $ at point C.

While the plateau is a rather specific feature of this and similar examples,
the existence of a discontinuity in the derivative $\phi^{\prime}\left(
x\right)  $ is a common feature of non-equilibrium low-noise systems. It
results from a competition between different trajectories which lead to the
same final point. Here, these are trajectories moving to the right and left.

In the context of boundary-driven diffusive systems, the LDFs of all models
which were previously studied exhibited a smooth LDF. This raises the question
of whether, and for what models of this family, will LDSs exist. In addition,
it is interesting to understand the structure of these singularities, and
whether they are similar to what is known for models in a finite-dimensional
space, where the structure can be understood using mean-field, or catastrophe theory.

\subsection{Background theory\label{sec:background}}

We now outline the theoretical tools used below. The average, or most probable
density profile for the system, $\bar{\rho}$, is obtained by solving
$\partial_{x}\left[  D\left(  \bar{\rho}\right)  \partial_{x}\bar{\rho
}\right]  =0$, with $\bar{\rho}\left(  0\right)  =\rho_{L}$ and $\bar{\rho
}\left(  1\right)  =\rho_{R}$ at the boundaries. As the noise is small, the
system spends most of its time close to $\bar{\rho}$. In order to find the
probability of any profile $\rho_{f}\left(  x\right)  $, one must therefore
compute the probability of reaching $\rho_{f}$, starting from $\bar{\rho}$ in
the distant past.

The probability density of a history $\left\{  \rho\left(  x,t\right)
,J\left(  x,t\right)  \right\}  $\ during time $-\infty\leq t\leq0$ is $P\sim
e^{-NS}$ where the action $S$ is given by
\cite{BertiniPRL,BertiniJstat,JSP_quantum,TKL_long,Freidlin_Wentzell}%
\begin{equation}
S=\int_{-\infty}^{0}dt\int_{0}^{1}dx\frac{\left[  J\left(  x,t\right)
+D\left(  \rho\left(  x,t\right)  \right)  \partial_{x}\rho\left(  x,t\right)
\right]  ^{2}}{2\sigma\left(  \rho\left(  x,t\right)  \right)  }%
\ .\label{eq:action}%
\end{equation}
The probability density $P\left[  \rho_{f}\right]  $ of reaching $\rho_{f}$ is
then given by the path integral
\[
P\left[  \rho_{f}\right]  =\int D\rho\int DJe^{-NS\left[  \rho,J\right]  }\ ,
\]
taken over histories satisfying $\partial_{t}\rho+\partial_{x}J=0$, with
initial and final conditions $\rho\left(  x,t\rightarrow-\infty\right)
=\bar{\rho}\left(  x\right)  $, $\rho\left(  x,t=0\right)  =\rho_{f}\left(
x\right)  $, and boundary conditions $\rho\left(  x=0,t\right)  =\rho_{L}$ and
$\rho\left(  x=1,t\right)  =\rho_{R}$. For large $N$ a saddle-point
approximation gives $P\sim e^{-N\phi\left[  \rho_{f}\right]  }$ with
$\phi\left[  \rho_{f}\right]  =\inf_{\rho,J}S$, where the infimum is over all
allowed histories.

In equilibrium (i.e. when $\rho_{L}$ $=\rho_{R}$), the steady-state
probability of a density profile $\rho_{f}\left(  x\right)  $ is easy to
obtain -- the LDF $\phi\left[  \rho_{f}\right]  $ is then given by the
free-energy which is local in $\rho_{f}$, $\phi\lbrack\rho_{f}]=\int f\left(
\rho_{f}\left(  x\right)  ,\bar{\rho}\right)  dx$, where
\begin{equation}
f\left(  \rho,r\right)  \equiv\int_{r}^{\rho}d\rho_{1}\int_{r}^{\rho_{1}}%
d\rho_{2}\frac{2D\left(  \rho_{2}\right)  }{\sigma\left(  \rho_{2}\right)
}\ . \label{eq:free_energy_density}%
\end{equation}
Note that in this case $\bar{\rho}$ is constant, $\bar{\rho}=\rho_{L}$
$=\rho_{R}$. By contrast, the steady-state probability distribution away from
equilibrium is notoriously hard to compute, and \emph{very} different from the
naive guess $\phi\lbrack\rho_{f}]=\int f\left(  \rho_{f}\left(  x\right)
,\bar{\rho}\left(  x\right)  \right)  dx$, now with space dependent $\bar
{\rho}\left(  x\right)  $. In fact, as stated above, $\phi\lbrack\rho_{f}]$ is non-local.

\section{Existence of LDS\label{sec:LDS_exist}}

As is clear from Eq. (\ref{eq:free_energy_density}), LDSs\ cannot exist in
equilibrium if $D\left(  \rho\right)  $ and $\sigma\left(  \rho\right)  $ are
smooth, positive functions. We now turn to discuss non-equilibrium cases where
they can exist. In previously studied exactly solvable non-equilibrium models
\cite{SSEP_large_dev,KMP_large_dev}, the action $S$\ in Eq. (\ref{eq:action}%
)\ has a single local minimal history leading to $\rho_{f}$, and $\phi\left[
\rho_{f}\right]  $ is then a smooth functional. However, this need not always
be the case, and there can be multiple local minima to $S$ to the same
$\rho_{f}$. For example, in the model of a single particle described in Sec.
\ref{sec:GandTel}, these are trajectories corresponding to the particle moving
left or right on the ring. When this occurs, the global minimum can switch
between the different local minima (as it does at point C in Fig.
\ref{fig:graham_tel_ring}). For fields, this is accompanied by a jump in the
functional derivative of the large-deviation $\delta\phi/\delta\rho_{f}$. This
is reminiscent of the mechanism for a first-order phase transition in equilibrium.

It is unclear which models display an LDS. It is known that models which
feature more than one fixed point of the zero-noise dynamics generically
display LDSs \cite{Graham_Tel}. In the models we study here, however,
$\bar{\rho}$ is the only zero-noise fixed-point. Therefore, the absence of
LDSs in previously studies models of this class is not surprising
\cite{SSEP_large_dev,KMP_large_dev}.

In this section we discuss the existence of LDSs in two models. One of the
models originates by taking the continuum limit of a \textquotedblleft
microscopic\textquotedblright\ lattice gas model, the driven Ising model. The
other has the advantage of being exactly solvable. As explained in Sec.
\ref{sec:background}, a model is defined by two functions, the diffusivity
$D\left(  \rho\right)  $ and the noise strength $\sigma\left(  \rho\right)  $.
These are shown in Fig. \ref{fig:models_sig_d}(a) for a specific set of
parameters of the driven Ising model, and \ref{fig:models_sig_d}(b) for the
analytically solvable model, referred to as the quadratic-$\sigma$
(QS)\ model. We first define the models, and then discuss their properties.

\subsection{Definition of models}

Here we define two models for which we demonstrate the existence of LDSs. As a
common feature, both models display a pronounced dip in the function
$\sigma\left(  \rho\right)  /D\left(  \rho\right)  $.%
\begin{figure}
[ptb]
\begin{center}
\includegraphics[
height=1.3589in,
width=3.3167in
]%
{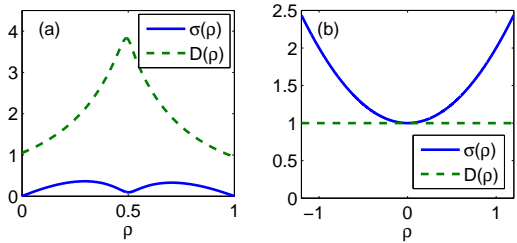}%
\caption{Model definitions. The functions $\sigma\left(  \rho\right)  $ and
$D\left(  \rho\right)  $ for (a) the BDI model, and (b) the QS model.}%
\label{fig:models_sig_d}%
\end{center}
\end{figure}

\subsubsection{The boundary-driven Ising (BDI) model}

This is a lattice gas with on-site exclusion and nearest-neighbor interaction.
It corresponds to the\ Katz-Lebowitz-Spohn model \cite{KLS} with zero bulk
bias. The model is defined on a 1d lattice with sites $i=1,..,N$, each of
which can be either occupied (\textquotedblleft1\textquotedblright) or empty
(\textquotedblleft0\textquotedblright). The model depends on two rate
parameters $\delta$ and $\varepsilon$. The jump rate from site $i$ to site
$i+1$ depends on the occupation at sites $i-1$ to $i+2$ according to the
following rules:
\begin{align*}
&  0100\overset{1+\delta}{\rightarrow}0010,\ 1101\overset{1-\delta
}{\rightarrow}1011\ ,\\
&  1100\overset{1+\varepsilon}{\rightarrow}1010,\ 1010\overset{1-\varepsilon
}{\rightarrow}1100\ ,
\end{align*}
and their spatially inverted counterparts with identical rates.

For equilibrium BCs, e.g., periodic BCs, the dynamics admits an Ising
probability distribution $P\propto\exp\left(  -\beta E\right)  $ with
\[
E=\sum_{i}\left(  1-2n_{i}\right)  \left(  1-2n_{i+1}\right)  +\mu\sum
_{i}\left(  1-2n_{i}\right)  \ .
\]
This energy describes nearest neighbor interactions, and a chemical potential
term. $\beta$ is related to $\varepsilon$ by $\exp\left(  4\beta\right)
=\left(  1+\varepsilon\right)  /\left(  1-\varepsilon\right)  $, and $\mu$
fixes the average density. The parameter $\delta$ does not affect the
stationary state, but does enter into the dynamical behavior of the model. For
each parameter set $\left(  \varepsilon,\delta\right)  $ one can write
implicit analytic equations for $D\left(  \rho\right)  ,\sigma\left(
\rho\right)  $ which can then be inverted numerically. The calculation is
described in Appendix \ref{sec:appendix_KLS_D_sig}.\ Fig.
\ref{fig:models_sig_d}(a) shows $D\left(  \rho\right)  $ and $\sigma\left(
\rho\right)  $ for $\left(  \varepsilon,\delta\right)  =\left(
0.05,0.995\right)  $. As can be seen, $D\left(  \rho\right)  $ is peaked and
$\sigma\left(  \rho\right)  $ has a local minimum around $\rho=1/2$. This will
be a key feature of the model.

\subsubsection{\emph{The} \emph{quadratric-}$\sigma$ (QS) model}

The model is defined by constant $D$ and $\sigma\left(  \rho\right)  =c\left(
\rho-b\right)  ^{2}+a$, with $a,c>0$, so that $\sigma\left(  \rho\right)  $ is
a parabola clear above the axis. Upon the rescaling
\begin{align*}
\rho &  \rightarrow\sqrt{\frac{a}{c}}\rho+b\ ,\ J\rightarrow D\sqrt{\frac
{a}{c}}J~,\ \\
x &  \rightarrow x\ ,\ t\rightarrow t/D\ ,\ S\rightarrow cS/D
\end{align*}
the model can be brought to a standard form defined by $D=1$ and
$\sigma\left(  \rho\right)  =\rho^{2}+1$, see Fig. \ref{fig:models_sig_d}(b).
This standard form will be used throughout the text. Note that the BCs of the
density map accordingly.

The QS model has the advantage that it is analytically tractable
\cite{KMP_large_dev}: the LDF is given by $\phi\left[  \rho_{f}\right]
=\min\phi_{ext}$, where $\phi_{ext}$ are extremal values of the action given
by%
\begin{equation}
\phi_{ext}=\int_{0}^{1}dx\left\{  f\left(  \rho_{f}\left(  x\right)  ,g\left(
x\right)  \right)  -\ln\frac{g^{\prime}\left(  x\right)  }{\bar{\rho}^{\prime
}\left(  x\right)  }\right\}  \ .\label{eq:quad_sig_LDF_expression}%
\end{equation}
Here $f\left(  \rho,g\right)  $ is defined in Eq.
(\ref{eq:free_energy_density}) and $g\left(  x\right)  $ is an auxiliary
function satisfying the differential equation%
\begin{equation}
0=\frac{g\left(  x\right)  -\rho_{f}\left(  x\right)  }{\sigma\left(  g\left(
x\right)  \right)  }-\frac{g^{\prime\prime}\left(  x\right)  }{\left[
g^{\prime}\left(  x\right)  \right]  ^{2}}\ ,\label{eq:general_ODE}%
\end{equation}
with BCs $g\left(  0\right)  =\rho_{L}$, and $g\left(  1\right)  =\rho_{R}%
$.\ Note that as $D=1$, the most probable configuration $\bar{\rho}\left(
x\right)  $ is linear, with $\bar{\rho}\left(  0\right)  =\rho_{L}$ and
$\bar{\rho}\left(  1\right)  =\rho_{R}$. Each of the solutions of Eq.
(\ref{eq:general_ODE}), when used in Eq. (\ref{eq:quad_sig_LDF_expression}),
gives $\phi_{ext}$ of an extremal path \cite{foot_extremize_S}.

\subsection{The use of cross-sections}

Below we demonstrate the existence of LDSs in the models defined above. As the
phase-space is infinite dimensional, the structure of $\phi$ is hard to
visualize. For many purposes it is sufficient to consider two-dimensional
cross-sections of the infinite-dimensional phase-space.

To this end, in most of what follows, out of the phase-space of final profiles
$\rho_{f}\left(  x\right)  $ we restrict ourselves to those parametrized by
just two variables, of the form
\begin{equation}
\rho_{f}\left(  x\right)  =\bar{\rho}\left(  x\right)  +\alpha_{1}\sin\pi
x+\alpha_{2}\sin2\pi x\ .\label{eq:sin_interp}%
\end{equation}
This is a cross-section in the phase-space of final states. (Note that in
Appendix \ref{sec:appendix_quad_sig} we use a different form.) It will be more
convenient to parametrize these profiles using $\rho_{f}\left(  1/3\right)  $
and $\rho_{f}\left(  2/3\right)  $ instead of $\alpha_{1},\alpha_{2}$. We
stress that this choice is rather arbitrary and that the singularity described
occupies a space of co-dimension 1 in the infinite-dimensional phase space.

In order to visualize trajectories $\rho\left(  x,t\right)  $ leading to
$\rho_{f}\left(  x\right)  $, we plot $\rho\left(  x=2/3,t\right)  $ against
$\rho\left(  x=1/3,t\right)  $. Note that here we do not constrain
$\rho\left(  x,t\right)  $ at intermediate times to be of the form in Eq.
(\ref{eq:sin_interp}).

\subsection{Non-unique path minimizers and the LDS}

As we now show, in both the BDI and the QS models, there are certain states
$\rho_{f}$ for which there exists more than a single history $\rho\left(
x,t\right)  $ that extremalizes the action in Eq. (\ref{eq:action}). In order
to find multiple extremal solutions we use different techniques, depending on
the model.

In the QS model we look for solutions to the differential equation
(\ref{eq:general_ODE}). These are found using a shooting method
\cite{numerical_recepies}, in which Eq. (\ref{eq:general_ODE}) is integrated
from $x=0$ to $x=1$, with initial conditions $g\left(  0\right)  =\rho_{L}$,
and $g^{\prime}\left(  0\right)  =c$. The values of $c$ are scanned
systematically to find all solutions where $g\left(  1\right)  =\rho_{R}$. In
this way all solutions of Eq. (\ref{eq:general_ODE}) are obtained.%
\begin{figure}
[ptb]
\begin{center}
\includegraphics[
height=1.8968in,
width=3.7182in
]%
{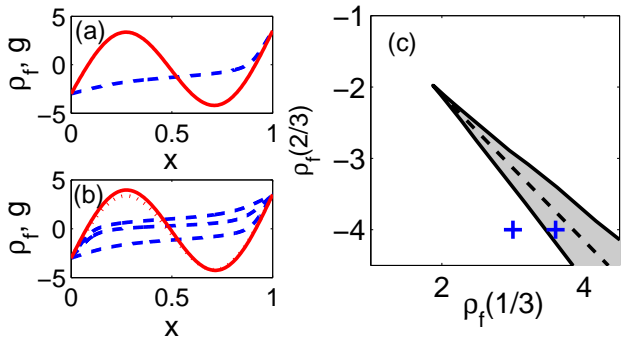}%
\caption{QS cusp. (a) A profile $\rho_{f}$ (solid line) for which Eq.
(\ref{eq:general_ODE}) has a single solution (dashed line). (b) A profile
$\rho_{f}$ for which Eq. (\ref{eq:general_ODE}) has three solutions. $\rho
_{f}$ of panel (a) is shown for comparison (dotted line). (c) A cross-section
- the density $\rho_{f}\left(  x=2/3\right)  $ vs. $\rho_{f}\left(
x=2/3\right)  $, for configurations of the form given by Eq.
(\ref{eq:sin_interp}). The region of multiple solutions (gray), and the
switching line (dashed line). Crosses denote the locations in this cross
section of the profiles shown in panels (a,b).}%
\label{fig:qs_cusp}%
\end{center}
\end{figure}

Using final profiles $\rho_{f}$ of the form of Eq. (\ref{eq:sin_interp}) with
$\rho_{L}=-3,\rho_{R}=3$, we find two distinct behaviors. For final profiles
which lie in the white region of Fig. \ref{fig:qs_cusp}(c) we obtain a single
solution to Eq. (\ref{eq:general_ODE}), as illustrated in Fig.
\ref{fig:qs_cusp}(a). In contrast, for final profiles in the gray region Fig.
\ref{fig:qs_cusp}(c) we find three solutions, see Fig. \ref{fig:qs_cusp}(b).
Of these three solutions, two correspond to local minima of the action Eq.
(\ref{eq:action}) and one to a saddle-point. Among the two minima, one is
lower than the other except along the \emph{switching line}, marked by a
dashed line in Fig. \ref{fig:qs_cusp}(c), where they are equal. On this line
the global minimum switches from one local minimum to the other. This leads to
a jump in the gradient of the LDF $\delta\phi/\delta\rho_{f}$ across the line.

The phase diagram, shown in Fig. \ref{fig:qs_cusp}(c), is reminiscent of that
obtained from a Landau free-energy. In this analogy, the gray region
corresponds to the free-energy having two local minima, one metastable. The
boundaries of the gray region are then the spinodal lines (where the
metastable minimum disappears), and the switching line corresponds to a
first-order transition (where the metastable and stable minima exchange
roles). The switching line terminates at a point analogous to a critical
point. We examine this issue in detail below, and show that a universal
behavior emerges.

It is natural to ask which BCs admit profiles $\rho_{f}\left(  x\right)  $
with multiple minimizing solutions. In the case of the QS model, we can in
fact show that: \emph{For any BCs }$\rho_{L}\neq\rho_{R}$\emph{, there exists
a profile }$\rho_{f}\left(  x\right)  $\emph{ for which Eq.
(\ref{eq:general_ODE}) is satisfied by more than one solution. }The proof is
given in Appendix \ref{sec:appendix_quad_sig}. This is interesting since it
implies that in this model \emph{even the smallest deviation} of the BCs from
equilibrium leads to the existence of LDSs. The closer the BCs are to
equilibrium, the further the states $\rho_{f}$ are from $\bar{\rho}$ before
multiple solutions exist.%
\begin{figure}
[ptb]
\begin{center}
\includegraphics[
height=1.7935in,
width=2.7702in
]%
{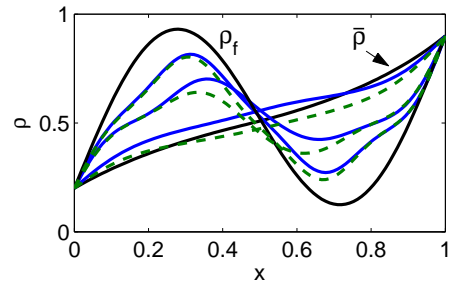}%
\caption{LDS in the BDI model. Two locally minimizing histories leading to the
same $\rho_{f}$, plotted at different times.}%
\label{fig:two_paths_bdi}%
\end{center}
\end{figure}
\begin{figure}
[ptb]
\begin{center}
\includegraphics[
height=1.9976in,
width=3.9241in
]%
{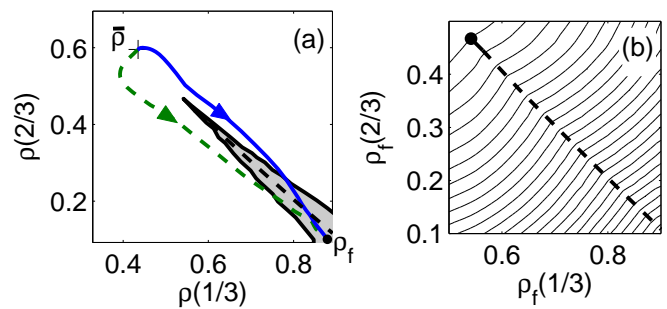}%
\caption{LDS in the BDI model. (a) The time evolution of $\rho\left(
1/3,t\right)  $ vs. $\rho\left(  2/3,t\right)  $ is plotted for the histories
of Fig. \ref{fig:two_paths_bdi}. (b)\ The switching line\ (dashed line),
together with lines of equal $\phi$ (solid lines).}%
\label{fig:kls_cusp}%
\end{center}
\end{figure}

We now turn to the BDI model. Here no analytical solution is known, and we
solve for local minimizers of the action $S$, using the numerical technique
described in \cite{our_numerics,ours_short}. Again, for $\rho_{L}\neq\rho_{R}$
we find final configurations $\rho_{f}$ with multiple minimizing solutions.
Fig. \ref{fig:two_paths_bdi} gives an example of such a $\rho_{f}$. The two
paths leading to this configuration are also shown in Fig. \ref{fig:kls_cusp}%
(a), where we plot the values of $\rho\left(  x=2/3,t\right)  $ against
$\rho\left(  x=1/3,t\right)  $ of the same histories. In Fig.
\ref{fig:kls_cusp}(a) we also plot the numerically obtained region in the
$\rho_{f}\left(  1/3\right)  $ and $\rho_{f}\left(  2/3\right)  $ plane for
which multiple histories are found, as well as the switching line. The jump in
the gradient $\delta\phi/\delta\rho_{f}$ is clear in Fig. \ref{fig:qs_cusp}%
(b), which depicts lines of equal $\phi$.

The LDSs in the two models have many features in common. The phase diagrams in
Fig. \ref{fig:qs_cusp}(c) and \ref{fig:kls_cusp}(a) have a similar structure,
with the singularities appearing for similar final profiles $\rho_{f}$. There
is one important difference: In contrast to the QS model, in the BDI model a
finite difference of the boundary conditions $\rho_{R}-\rho_{L}$ is needed in
order for an LDS to exist. In both models, generally we find (data not shown)
that as the value $\rho_{R}-\rho_{L}$ is decreased, the region with multiple
solutions is pushed away from $\bar{\rho}$. However, in contrast to the QS
model where $\rho$ is unbounded, in the BDI model $\rho$ is bounded
($0\leq\rho\leq1$). Hence below some threshold value, no LDS is found in the
BDI model. Similarly, by tuning $\varepsilon$ and $\delta$ in the BDI model,
$D$ and $\sigma$ can be continuously varied from the simple symmetric
exclusion model with $D=1$ and $\sigma=2\rho\left(  1-\rho\right)  $, for
which the LDF $\phi$ is known to be smooth, to the model discussed above. The
singularity appears when the dip in $\sigma\left(  \rho\right)  /D\left(
\rho\right)  $ is deep enough (data not shown).

To summarize, in both models we find LDSs when the function $\sigma\left(
\rho\right)  /D\left(  \rho\right)  $ has a (deep enough) local minimum.
Numerical experiments indicate that this is indeed, more generally, the
requirement. Recall that by fluctuation-dissipation, the ratio is related to
the compressibility $\sigma\left(  \rho\right)  /D\left(  \rho\right)
=2k_{B}T\rho^{2}\kappa\left(  \rho\right)  $. The profiles where the LDS is
found always have a shape similar to that in Fig. \ref{fig:qs_cusp}(b) and
Fig. \ref{fig:two_paths_bdi}. Intuitively, the existence of multiple
locally-minimizing histories leading to the same $\rho_{f}$ is due to the
favorable action due to large $\sigma\left(  \rho\right)  $ on certain
trajectories, utilizing densities on either side of the minimum in
$\sigma\left(  \rho\right)  $. A similar argument can be given for the ratio
$\sigma\left(  \rho\right)  /D\left(  \rho\right)  $. The existence and exact
location of the LDS depends on the full functional form of $\sigma\left(
\rho\right)  $ and \thinspace$D\left(  \rho\right)  $. It would be of interest
to find precise criteria.

\section{Structure of cusp\label{sec:cusp_structure}}

As discussed above, the structure of the LDS is similar in different models.
Consider $\rho_{f}$ in some fixed 2d cross-section of the full phase-space,
e.g., the cross-section defined in Eq. (\ref{eq:sin_interp}). As can be seen
in Fig. \ref{fig:kls_cusp}(b), the switching line ends at a single profile (a
point in the cross-section), which we denote by $\rho_{f}^{cusp}\left(
x\right)  $. This is much like a first-order transition line ending at a
second order point. We now discuss the behavior of the LDF $\phi\left[
\rho_{f}\right]  $ near $\rho_{f}^{cusp}$, as a function of $\rho_{f}$ and
$N$. As we now show, in the simplest scenario $\phi$ behaves like in a Landau
mean-field second order phase transition, or a \textquotedblleft cusp
catastrophe\textquotedblright\ in the language of catastrophe theory
\cite{Berry,catastrophe_ref}.

The discussion builds on previous results pertaining to systems with few
degrees of freedom
\cite{Noise_non_lin_book,MaierSteinCuspBiforcates,Maier_Stein,Dykman_cusp_structure,Dykman}%
. The singularity structure is well understood in such systems, where
catastrophe theory is applicable. The extension to fields requires care, as we
show below, see discussion at the end of this section. We first present the
theoretical considerations. Appendices \ref{sec:QS_cusp_ctructure} and
\ref{sec:appendix_KLS} verify the prediction for the QS and BDI models.

\subsection{Multiple minima near the cusp}

The action $S\left[  \rho,J\right]  $ is a functional of both $\rho$ and $J$.
The dependence of $S$ on the current $J$ is quadratic, and at fixed $\rho$ the
minimum over $J$, subject to Eq. (\ref{eq:conservation}), is unique. It will
therefore be convenient to work with the action after $J$ has been minimized:%
\[
s\left[  \rho\right]  =\min_{J}S\left[  \rho,J\right]  \ .
\]%
\begin{figure}
[ptb]
\begin{center}
\includegraphics[
height=1.1843in,
width=1.7918in
]%
{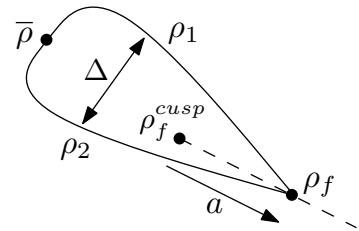}%
\caption{Definition of quantities near the cusp. Dashed line - switching
line.}%
\label{fig:cusp_tip_definitions}%
\end{center}
\end{figure}

For a given $\rho_{f}$ on the switching line there are two histories,
$\rho_{1}\left(  x,t\right)  $ and $\rho_{2}\left(  x,t\right)  $, which
minimize the action, as in Fig. \ref{fig:two_paths_bdi}. We introduce%
\[
a=\left[  \int\left(  \rho_{f}-\rho_{f}^{cusp}\right)  ^{2}dx\right]  ^{1/2}%
\]
as the distance of the final configuration $\rho_{f}$ from $\rho_{f}^{cusp}$,
see Fig. \ref{fig:cusp_tip_definitions}. We define a coordinate system,
$\left(  a,b\right)  $ on the cross-section, with $\rho_{f}^{cusp}$ at the
origin, $\hat{a}$ directed along the switching line and positive on the
switching line, and $\hat{b}$ orthogonal to the switching line. In analogy
with Landau mean-field theory, $a$ plays the role of $\left(  T_{c}-T\right)
$ and $b$ the role of the magnetic field.

Close to the cusp, when moving in the positive $a$ direction, for small enough
$\left\vert b\right\vert $,$\ s$\ has two locally-minimizing solutions,
$\rho_{1}$ and $\rho_{2}$. Let (see Fig. \ref{fig:cusp_tip_definitions})%
\begin{align}
\rho_{_{\left(  a,b\right)  }}^{avg}\left(  x,t\right)   &  =\frac{1}%
{2}\left[  \rho_{1}\left(  x,t\right)  +\rho_{2}\left(  x,t\right)  \right]
\ ,\nonumber\\
\delta\rho_{\left(  a,b\right)  }\left(  x,t\right)   &  =\frac{1}{2}\left[
\rho_{2}\left(  x,t\right)  -\rho_{1}\left(  x,t\right)  \right]
\ ,\nonumber\\
u_{\left(  a,b\right)  }\left(  x,t\right)   &  =\delta\rho_{\left(
a,b\right)  }/\left\Vert \delta\rho_{\left(  a,b\right)  }\right\Vert \ .
\label{eq:cusp_tip_defs}%
\end{align}
and
\[
\Delta=\left\Vert \delta\rho\right\Vert \ ,
\]
where we quantify the distance between two histories by $\left\Vert \delta
\rho\right\Vert ^{2}=\int\left[  \delta\rho\left(  x,t\right)  \right]
^{2}dxdt$. Here $\Delta$ plays the role of the amplitude of the order
parameter. Note that at the cusp $\Delta=0$.

On the switching line $b=0$ and $s\left[  \rho_{1}\right]  =s\left[  \rho
_{2}\right]  $ by definition. Hence%
\[
s_{\left(  a,b\right)  }\left(  q\right)  =s\left[  \rho_{\left(  a,b\right)
}^{avg}+qu_{\left(  a,b\right)  }\right]
\]
has two minima, at $q_{\min}=\pm\Delta$. $q$ is an \textquotedblleft order
parameter\textquotedblright\ interpolating between $\rho_{1}$ and $\rho_{2}$.
Close to $\rho_{f}^{cusp}$ the two minima should approach each other, and
merge to a single minimum at $\rho_{f}^{cusp}$. The simplest form which
captures this behavior and is analytical in $q,a$ and $b$ is%

\begin{equation}
\tilde{s}_{\left(  a,b\right)  }\left(  q\right)  =s_{\left(  a,b\right)
}\left(  q\right)  -s_{\left(  a,b\right)  }\left(  0\right)  =c_{4}%
q^{4}-ac_{2}q^{2}+c_{1}bq\ , \label{eq:S_of_xsi}%
\end{equation}
with $c_{1},c_{2},c_{4}>0$ constants. $\tilde{s}$ can also include higher
powers of $q$, which would not affect the behavior at small $a$. At small $a$
and $b=0$, $\tilde{s}\left(  q\right)  $ has two minima, at $q_{\min}%
\propto\pm\sqrt{a}$, hence $\Delta\propto\sqrt{a}$ in direct analogy with
Landau theory, with a mean-field exponent equal to 1/2.

In Appendix \ref{sec:appendix_KLS} we check the validity of
Eq.\ (\ref{eq:S_of_xsi}) on the BDI model. We show that it indeed holds, but
that higher order terms are still significant until close to the cusp point
($\left\Vert \rho_{f}-\rho_{f}^{cusp}\right\Vert \sim10^{-2}$). For the QS
model we use a different approach, see below.

\subsection{Effect of \textquotedblleft soft modes\textquotedblright}

When $\rho_{f}$ approaches $\rho_{f}^{cusp}$ from the positive $a$ direction,
the height of the action barrier separating the two local minima vanishes.
This means that the contribution of the paths close to the minimal paths is
enhanced. As we now show, this gives a new universal contribution to the
probability of $\rho_{f}^{cusp}$, scaling logarithmically in $N$%
\begin{equation}
P\left[  \rho_{f}^{cusp}\right]  \sim\exp\left(  -N\phi\left[  \rho_{f}%
^{cusp}\right]  +\frac{1}{4}\log N\right)  \ . \label{eq:rho_f_ln_correction}%
\end{equation}
The universal factor $1/4$ is known as the Berry index in catastrophe theory
\cite{Berry}.

To see this, we go back to the path integral formulation, $P\left[  \rho
_{f}\right]  =\int D\rho DJ\exp\left\{  -NS\left[  r,J\right]  \right\}  $. In
this expression, if the path integral is discretized then the measure is
$Dx=\prod_{i}\left(  \sqrt{N}dx_{i}\right)  $, where the $\sqrt{N}$ ensure
normalization. For given $\rho\left(  x,t\right)  $, $S$ is a quadratic
functional in $J$, so $J$ can be integrated out. For large $N$ a saddle-point
approximation gives%
\[
\ln\left(  \int DJ\exp\left\{  -NS\left[  \rho,J\right]  \right\}  \right)
=-N\min_{J}S\left[  \rho,J\right]  =-Ns\left[  \rho\right]  \ ,
\]
and the path integral now reads%
\begin{equation}
P\left[  \rho_{f}\right]  =\int D\rho\left(  x,t\right)  \exp\left\{
-Ns\left[  \rho\right]  \right\}  \ .
\end{equation}
Note that the correction to the saddle-point is $N^{0}$ in this case
\cite{foot_J_min}.

We focus on final configurations in the cross-section. Since close to the
cusp, when moving in the positive $a$ direction, $s\left[  \rho_{f}\right]
\,$\ has two solutions, this means that the Hessian matrix%
\begin{equation}
H\equiv\frac{\delta^{2}s}{\delta\rho_{\left(  x_{1},t_{1}\right)  }\delta
\rho_{\left(  x_{2},t_{2}\right)  }}\label{eq:Hessian_def}%
\end{equation}
has at least one vanishing eigenvalue for the optimal path ending at $\rho
_{f}^{cusp}$. (This is a standard result in catastrophe theory
\cite{catastrophe_ref}.) The corresponding eigenvector $u\left(  x,t\right)  $
is precisely $u_{\left(  a,b\right)  }$ defined in Eq. (\ref{eq:cusp_tip_defs}%
) for $\rho_{f}\rightarrow\rho_{f}^{cusp}$, i.e. $u\left(  x,t\right)
=u_{\left(  a\rightarrow0^{+},b=0\right)  }$. As $s$ is minimal, $H$ is
positive semi-definite, and its entire spectrum is non-negative. We now assume
that there is a single zero eigenvalue, followed by a finite gap. We then
split the histories as follows
\[
\rho\left(  x,t\right)  =\rho_{avg}\left(  x,t;a,b\right)  +qu\left(
x,t\right)  +\rho_{\perp}\left(  x,t;\rho_{f}\left(  a,b\right)  \right)  \ .
\]
Here $\rho_{avg}$ is defined as above when there are two minima, and is equal
to the minimal history when it is unique. $q$ is a numerical prefactor, and
all other contributions are included in $\rho_{\perp}\left(  x,t\right)  $.
Integrating out the $\rho_{\perp}$ directions we are left with an integral
over $q$%
\[
P\left[  \rho_{f}\left(  a,b\right)  \right]  \sim e^{-Ns_{\left(  a,b\right)
}\left(  0\right)  }N^{1/2}\int dq\exp\left\{  -N\tilde{s}_{\left(
a,b\right)  }\left(  q\right)  \right\}  \ .
\]
where $\tilde{s}_{\left(  a,b\right)  }\left(  q\right)  $ is defined in Eq.
(\ref{eq:S_of_xsi}). The $N^{1/2}$ comes from the definition of the path
integral measure. The form in Eq. (\ref{eq:S_of_xsi}) was argued on the basis
of the analyticity of $\tilde{s}\,$, justified by our assumption of the gap in
$H$. The integral
\[
\psi\left(  N,a,b\right)  =N^{1/2}\int dq\exp\left[  -N\left(  c_{1}%
bq-c_{2}aq^{2}+c_{4}q^{4}\right)  \right]
\]
is known as the \textquotedblleft cusp diffraction
catastrophe\textquotedblright\ \cite{Berry}. We note two of its properties:
(a) the \textquotedblleft metastability\textquotedblright\ region, where the
integrand has two local minima as a function of $q$ is bounded by $b\propto\pm
a^{2/3}$. (b) $\psi\left(  N,z_{1},z_{2}\right)  $\ has the scaling property
(with $u=N^{1/4}q$)
\begin{align}
\psi_{N}\left(  a,b\right)   &  =N^{1/2}\int dq\exp\left[  -N\left(
c_{1}bq-c_{2}aq^{2}+c_{4}q^{4}\right)  \right]  \nonumber\\
&  =N^{1/4}\Psi\left(  N^{1/2}a,N^{3/4}b\right)  \ ,\label{eq:berry_arnold}%
\end{align}
where $\Psi\left(  \alpha,\beta\right)  =\int dv\exp\left[  -\left(
c_{1}\beta v-c_{2}\alpha v^{2}+c_{4}v^{4}\right)  \right]  $ has no $N$
dependence. At $\left(  a,b\right)  =\left(  0,0\right)  $ this becomes
$\psi\left(  0,0\right)  =N^{1/4}\Psi\left(  0,0\right)  $.

Therefore, at $\rho_{f}^{cusp}$ we have $\phi\left[  \rho_{f}^{cusp}\right]
=s_{\left(  a,b\right)  }\left(  0\right)  $, and $P\left[  \rho_{f}%
^{cusp}\right]  $ has an additional $N^{1/4}$ prefactor to the probability
distribution shown in Eq. (\ref{eq:rho_f_ln_correction}). This means that at
the cusp the exponentiated $N$ dependence has an additional non-analytic
contribution, scaling as $\log N$ with a universal prefactor. The exponents
$1/4,1/2,3/4$ in Eq. (\ref{eq:berry_arnold}) were introduced in \cite{arnold}
and \cite{Berry}. This implies that the $\log N$ correction in Eq.
(\ref{eq:rho_f_ln_correction}) affects the probability in an elongated region
of dimensions $\Delta a\times\Delta b\sim N^{-1/2}\times N^{-3/4}$ around
$\rho_{f}^{cusp}$.

The above analysis relies on the assumption that the Hessian spectrum has a
single zero mode followed by a gap. This can be generalized to situations
where there is a finite number of zero modes followed by a gap, using tools
from catastrophe theory. In such cases, more complicated singular structures
will appear at $\rho_{f}^{cusp}$, with modified universal exponents. The
existence of a gap is expected to always hold in systems with
finite-dimensional phase spaces. However, in the case of fields, where the
phase space is infinite dimensional, the Hessian may be gapless. Then the
analyticity of the action might fail altogether, as indeed happens in
equilibrium critical phenomena \cite{schulman}.

In Appendices \ref{sec:QS_cusp_ctructure} and \ref{sec:appendix_KLS} we show
that the assumptions indeed hold. Specifically, for the QS model, we prove
that for specific types of profiles the assumption of analyticity is
justified. In addition, we calculate the Hessian spectrum numerically, and
find a gap above a single zero mode. For the BDI model, we show numerically
that the action indeed has a Landau mean-field form, Eq.\ (\ref{eq:S_of_xsi}).

\section{The connection with finite-dimensional phenomena -- a toy
model\label{sec:toy_model}}

In an attempt to better understand the LDS in this system, we note that the
minimizing histories $\rho\left(  x,t\right)  $ in Fig.
\ref{fig:two_paths_bdi} appear to be quite smooth in $x$; this is sensible, as
the field is constantly diffusing, making enduring, sharp spatial gradients
improbable. This was studied in \cite{our_numerics}. The smoothness motivates
us to introduce toy models with a finite number of degrees of freedom, which
capture many of the essential features of the field models described above.

To this end we discretize the field $\rho\,\left(  x,t\right)  $, replacing it
with a vector $\rho_{i}\left(  t\right)  $, $i=1,..,N_{p}$, corresponding to
the density at the points $x_{i}$. Substituting Eq. (\ref{eq:J_def}) into Eq.
(\ref{eq:conservation}), the Langevin equation reads $\partial_{t}%
\rho=\mathbf{\nabla}\cdot\left[  D\left(  \rho\right)  \mathbf{\nabla}%
\rho+\sqrt{\sigma\left(  \rho\right)  }\eta\right]  $. For $x\in\left[
0,1\right]  $, we take $x_{i}=i/(N_{p}+1)$, and obtain $N_{p}$ coupled
Langevin equations
\begin{align}
\partial_{t}\rho_{i} &  =\left(  \Delta x\right)  ^{-2}\left[  D_{i+1/2}%
\left(  \rho_{i+1}-\rho_{i}\right)  -D_{i-1/2}\left(  \rho_{i}-\rho
_{i-1}\right)  \right]  \nonumber\\
&  +\left(  \Delta x\right)  ^{-1}\left[  \sqrt{\sigma_{i,i+1}}\eta
_{i+1/2}-\sqrt{\sigma_{i-1,i}}\eta_{i-1/2}\right]  \ ,
\end{align}
\newline where $\Delta x=(N_{p}+1)^{-1}$, and $\rho_{0},\rho_{N_{p}+1}$ are
assigned the boundary values $\rho_{B}\left(  0\right)  ,\rho_{B}\left(
1\right)  $ respectively, and $\left\langle \eta_{i+1/2}\eta_{j+1/2}%
\right\rangle =\left(  \Delta x\right)  N^{-1}\delta_{i,j}$. $D_{i,i+1}%
,\sigma_{i,i+1}$ are an appropriate choice for $D\left(  \rho\right)
,\sigma\left(  \rho\right)  $ for $x$ between $x_{i}$ and $x_{i+1}$. We choose
$D_{i,i+1}=\frac{1}{2}\left[  D\left(  \rho_{i}\right)  +D\left(  \rho
_{i+1}\right)  \right]  $. A similar choice can be made for $\sigma_{i,i+1}$,
but we use $\sigma_{i,i+1}=2D_{i,i+1}\left(  \rho_{i+1}-\rho_{i}\right)
\left[  f^{\prime}\left(  \rho_{i+1}\right)  -f^{\prime}\left(  \rho
_{i}\right)  \right]  ^{-1}$, where $f^{\prime}\left(  \rho\right)  $ is given
in Eq. (\ref{eq:free_energy_density}). This has the advantage that when the
BCs are equal, $\rho_{0}=\rho_{N_{p}+1}$, the system of Langevin equations
satisfies detailed-balance \cite{gardiner} with respect to the potential
$\phi\left(  \left\{  \rho_{i}\right\}  \right)  =\Delta x\sum_{i}f\left(
\rho_{i}\right)  $. The discrete Langevin equation converges at high $N_{p}$
to the field-theory (as can be seen by writing the action). As the histories
are smooth we expect rapid convergence for long wave lengths. We therefore use
the lowest non-trivial discretization, $N_{p}=2$, which can accommodate
non-equilibrium phenomena. In this case $\rho_{1,2}$ correspond to
\textquotedblleft coarse-grained\textquotedblright\ densities at $x=1/3,2/3$ respectively.

The minimizing histories for the toy model with $N_{p}=2$ can be obtained
using standard techniques from low-noise finite-dimensional systems. Using the
approach outlined in the introduction, a set of coupled ordinary differential
equation is obtained, whose solutions are the extremizing histories. The
equations are solved numerically using a shooting method\cite{Graham_Tel}.

In Fig. \ref{fig:kls_vs_toy} we check the above approach on the simple
symmetric exclusion model (SSEP) with $D=1$ and $\sigma=2\rho\left(
1-\rho\right)  $, which does not feature a cusp, and for the BDI model. We
plot trajectories of the toy version of the SSEP\ (Fig. \ref{fig:kls_vs_toy}%
(a)) and the BDI model (Fig. \ref{fig:kls_vs_toy}(b)) in the $\left(  \rho
_{1},\rho_{2}\right)  $ plane, against the $\left(  \rho_{1/3},\rho
_{2/3}\right)  $ trajectories of the full models. In addition, the
metastability region for the toy model and in the cross-section of the exact
dynamics are plotted. The qualitative picture is similar -- a metastability
region appears in the quadrant $\rho_{1/3}>0,\rho_{2/3}<0$, at approximately
the same location as in the exact field solution. This is expected to have a
close relation to the breaking of detailed-balance, which is easy to visualize
in the two-dimensional toy model. Define the two-variable Langevin equation
$dx_{i}/dt=K_{i}\left(  x_{1},x_{2}\right)  +\sum_{j=1,2}B_{ij}\eta_{j}$, and
let $\mathbf{Q}\equiv\mathbf{BB}^{T}$. Detailed-balance is satisfied if
$\mathbf{\nabla}\phi=\mathbf{Q}^{-1}\mathbf{K}$, or $\nabla\times\left(
\mathbf{Q}^{-1}\mathbf{K}\right)  =0$. Therefore, in two (phase-space)
dimensions $\omega=\nabla\times\left(  \mathbf{Q}^{-1}\mathbf{K}\right)  $ is
a scalar which quantifies the breaking of detailed balance. It is shown in
Fig. \ref{fig:kls_toy_db_break}. Minimizing trajectories passing through
regions with $\omega>0$ ($\omega<0$) bend counter-clockwise (clockwise).
Therefore, a gradient in $\omega$ can cause trajectories to focus and cross,
creating a cusp. A similar picture has been discussed for other
finite-dimensional systems \cite{Dykman}.

In summary, the above analysis shows that much of the phenomena observed can
be captured by simplified finite-dimensional models, making concrete
connections to previous works on such models.%

\begin{figure}
[ptb]
\begin{center}
\includegraphics[
height=1.5452in,
width=3.6521in
]%
{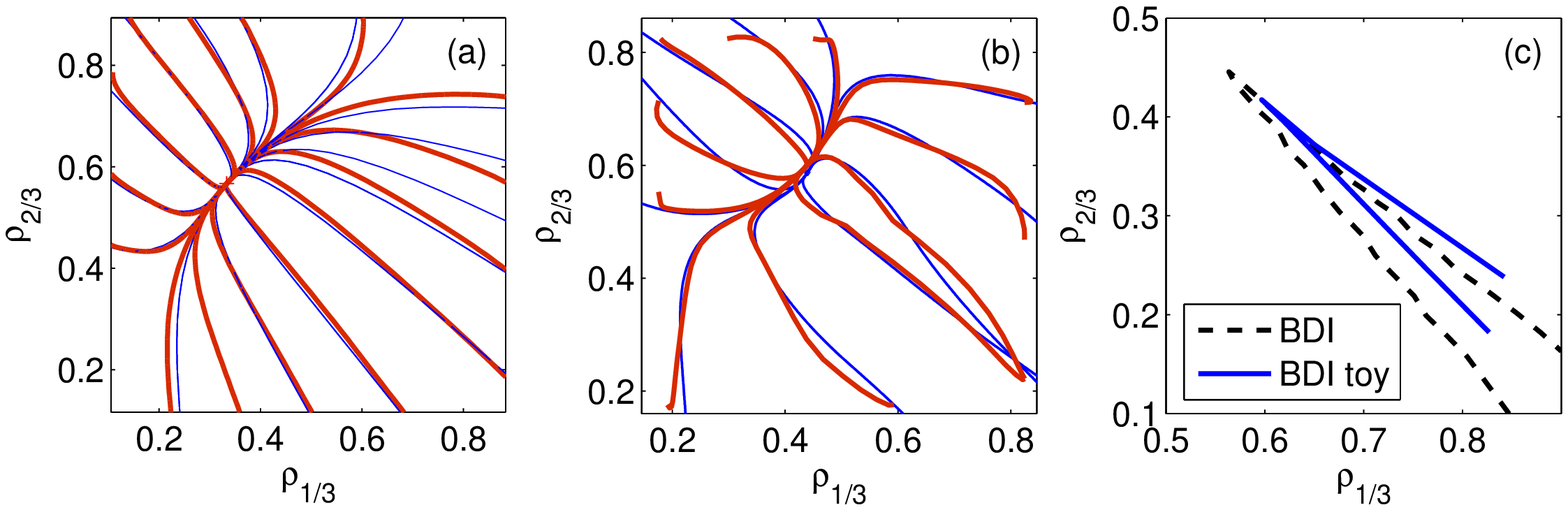}%
\caption{(a) Toy model paths (thin lines) compared with cross-sections of
exact paths for the SSEP model. (b) Same as (a) for BDI model. (c) Cusp area
for toy vs. exact in the BDI model. BCs are $\rho_{L}=0.2,\rho_{R}=0.9$.}%
\label{fig:kls_vs_toy}%
\end{center}
\end{figure}
\begin{figure}
[ptb]
\begin{center}
\includegraphics[
height=2.0239in,
width=2.55in
]%
{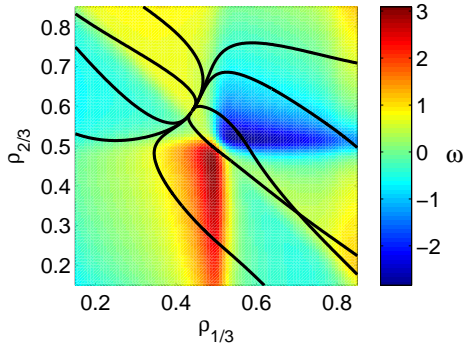}%
\caption{Map of $\omega$, the measure for breaking of detailed-balance, in the
BDI model. Clockwise currents for positive $\omega$. In black: selected
baths.}%
\label{fig:kls_toy_db_break}%
\end{center}
\end{figure}

\section{Discussion\label{sec:discussion}}

Many interesting questions remain to be studied. First, it would be of
interest to find precise conditions or bounds for the occurrence of the LDSs,
depending on the boundary conditions and model parameters. In addition, a
simple picture for the mechanism leading to the existence of multiple
histories ending at the same final profile is lacking. The mechanism suggested
by the current authors in \cite{ours_short} was flawed \cite{foot_sleeve_flaw}.

The singularity described above has the simplest possible structure. More
complicated singularities are in principle possible. For instance, in
catastrophe theory, richer structures have been analyzed. Observing them in
full requires one to look at higher dimensional cross-sections. It would be
interesting to find which of them exist in diffusive models, and for which
models. Even more exciting possibilities exist: Fields allow for the
possibility that the Hessian discussed in Sec. \ref{sec:cusp_structure} is
gapless. Thus, it may even be possible to find singularities that are beyond
the realm of catastrophe theory.

Finally, it would be very interesting to look at these LDSs in higher
dimensional systems. This is now possible using the numerical technique
described in \cite{our_numerics}, and used in the present paper.

\bigskip

\emph{Acknowledgments} - We are grateful for discussions with B. Derrida, J.
Kurchan, O. Raz and J. Tailleur. This research was funded by the BSF, ISF, and
IRG grants.

\bigskip%

\appendix

\section{Calculating $D\left(  \rho\right)  $ and $\sigma\left(  \rho\right)
$ for the driven Ising model\label{sec:appendix_KLS_D_sig}}

As shown in \cite{spohn_book,KLS_sigma}, for each parameter set $\left(
\varepsilon,\delta\right)  $ one can write implicit analytic equations for
$D\left(  \rho\right)  $ which can then be inverted numerically. Then
$\sigma\left(  \rho\right)  $ is obtained via the fluctuation-dissipation
relation, $\sigma\left(  \rho\right)  =2k_{B}T\rho^{2}\kappa\left(
\rho\right)  D\left(  \rho\right)  $ where $\kappa\left(  \rho\right)  $ is
the compressibility \cite{Derrida_review}. For equilibrium BCs this model
admits an Ising measure.

To find $D\left(  \rho\right)  $, we use the relation
\cite{foot_gradient_cond}
\[
D=\frac{1}{2\chi}\left(  \left\langle j_{i,i+1}\right\rangle +\left\langle
j_{i,i-1}\right\rangle \right)  =\frac{1}{\chi}\left\langle j_{i,i+1}%
\right\rangle
\]
where $\chi=\sum_{i}\left(  \left\langle n_{i}n_{0}\right\rangle -\rho
^{2}\right)  $ (related to the compressibility $\kappa\left(  \rho\right)  $
by $\chi=k_{B}T\rho^{2}\kappa\left(  \rho\right)  $), $j_{i,i+1}$ is the
current (number of particles per unit time) from site $i$ to site $i+1$. The
averages are taken with respect to the equilibrium probability distribution,
and $\left\langle j_{i,i+1}\right\rangle =\left\langle j_{i,i-1}\right\rangle
$ due to the symmetries in equilibrium. One then finds $\sigma$ using
\[
\sigma=2\chi D=2\left\langle j_{i,i+1}\right\rangle \ .
\]

To calculate $\left\langle j_{i,i+1}\right\rangle $ note that as the rates
depend on the four sites around a bond, we have that%
\begin{align*}
\left\langle j_{i,i+1}\right\rangle  &  =\left(  1+\delta\right)
P_{0100}+\left(  1+\varepsilon\right)  P_{1100}\\
&  +\left(  1-\varepsilon\right)  P_{0101}+\left(  1-\delta\right)  P_{1101}%
\end{align*}
where $P_{0100}$ is the probability of the pattern $0100$, and similarly for
others. Using the transfer-matrix technique \cite{KLS_sigma}, one can
calculate these probabilities and obtain%
\[
\left\langle j_{i,i+1}\right\rangle =\frac{\lambda\left[  1+\delta\left(
1-2\rho\right)  \right]  -\varepsilon\sqrt{4\rho\left(  1-\rho\right)  }%
}{\lambda^{3}}%
\]
where
\[
\lambda=\frac{1}{\sqrt{4\rho\left(  1-\rho\right)  }}+\left(  \frac{1}%
{4\rho\left(  1-\rho\right)  }-1+\frac{1-\varepsilon}{1+\varepsilon}\right)
^{1/2}\ .
\]
It remains to find $\rho$ and $\kappa$, which are both given in terms of
$\beta$ (recall that $\exp\left(  4\beta\right)  =\left(  1+\varepsilon
\right)  /\left(  1-\varepsilon\right)  $), and $h=\beta\mu$:%
\begin{align*}
\rho &  =\frac{1}{2}\left(  1+\frac{\sinh h}{\sqrt{e^{4\beta}+\sinh^{2}h}%
}\right)  ~,\\
\chi &  =\frac{e^{4\beta}\cosh h}{4\left(  e^{4\beta}+\sinh^{2}h\right)
^{3/2}}\ .
\end{align*}

In order to obtain $D\left(  \rho\right)  ,\sigma\left(  \rho\right)  $, we
calculate $\sigma\left(  h\right)  ,D\left(  h\right)  $ and $\rho\left(
h\right)  $ for a wide range of $h$, and numerically invert the last to find
$D\left(  \rho\right)  =D\left(  h\left(  \rho\right)  \right)  $ and
$\sigma\left(  \rho\right)  =\sigma\left(  h\left(  \rho\right)  \right)  $.
Fig. \ref{fig:models_sig_d}(a) shows $D\left(  \rho\right)  $ and
$\sigma\left(  \rho\right)  $ for $\left(  \varepsilon,\delta\right)  =\left(
0.05,0.995\right)  $.

As a check, we note that for the simple symmetric exclusion process
\cite{Derrida_review} $\delta=\varepsilon=0$, and one finds
\[
\rho=\frac{1}{2}\left(  1+\tanh h\right)  \ ;\ \ \chi=\frac{1}{4\cosh^{2}%
h}=\rho\left(  1-\rho\right)
\]
and
\[
\lambda=2\cosh h\ ;\ \ \ \left\langle j_{i,i+1}\right\rangle =\lambda
^{-2}=\frac{1}{4\cosh^{2}h}%
\]
so that $D=1$ and $\sigma=2\rho\left(  1-\rho\right)  $ \cite{Derrida_review}.

\section{Existence of multiple extremal solutions in the QS
model\label{sec:appendix_quad_sig}}

In this Appendix we prove that for the QS model, which has $D=1$ and
$\sigma\left(  \rho\right)  =\rho^{2}+1$, and for any non-equilibrium BCs,
there exists a LDS for some profiles. Here it will be far more convenient to
work in the domain $x\in\left[  -1,1\right]  $. The results in the new domain
are simply related to the results in the original domain
\cite{foot_QS_change_vars}.

The BCs to Eq. (\ref{eq:general_ODE}) are denoted by $\rho_{-1}\equiv\rho_{L}$
and $\rho_{+1}\equiv\rho_{R}$.

\begin{claim}
For any BCs $\rho_{-1}\neq\rho_{+1}$, there exists a profile $\rho_{f}\left(
x\right)  $ for which Eq. (\ref{eq:general_ODE}) is satisfied by more than one
solution with $g\left(  \pm1\right)  =\rho_{\pm1}$.
\end{claim}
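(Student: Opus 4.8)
The plan is to construct, for any fixed non-equilibrium boundary conditions $\rho_{-1}\neq\rho_{+1}$, a one-parameter family of profiles $\rho_f$ along which the number of solutions of Eq.~(\ref{eq:general_ODE}) jumps, forcing a profile with more than one solution somewhere along the family. The natural parameter is an overall amplitude: take $\rho_f^{(s)}(x) = \bar\rho(x) + s\,w(x)$ for a fixed bump $w$ with $w(\pm1)=0$ chosen so that $\rho_f^{(s)}$ has the characteristic shape of Fig.~\ref{fig:qs_cusp}(b) (i.e.\ $w$ pushes the profile to values on both sides of the minimum of $\sigma$, at $\rho=0$). At $s=0$ the profile is $\bar\rho$ itself, which is linear; here Eq.~(\ref{eq:general_ODE}) is solved by $g=\bar\rho$, and one checks this solution is unique (the source term $g-\rho_f$ vanishes identically and the linearization has no conjugate points on $[-1,1]$). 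So the goal reduces to showing that for $s$ large enough the solution is \emph{not} unique.

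The main step is therefore to exhibit multiplicity at large amplitude. I would use the shooting formulation: integrate Eq.~(\ref{eq:general_ODE}) from $x=-1$ with $g(-1)=\rho_{-1}$ and $g'(-1)=c>0$, and study the shooting function $c\mapsto G(c;s):=g(1;c,s)$, whose roots $G(c;s)=\rho_{+1}$ are exactly the extremal solutions. Solutions are non-unique precisely when $G(\cdot;s)$ hits the value $\rho_{+1}$ at least twice, which (since $G$ is continuous in $c$) happens as soon as $G(\cdot;s)$ is non-monotone and its range straddles $\rho_{+1}$ with a local extremum on the correct side. The heart of the argument is an asymptotic / monotonicity analysis of $G$ in the two regimes $c\to0^+$ and $c\to\infty$, together with the effect of a deep, wide bump $w$: when $\rho_f$ dips strongly toward the $\sigma$-minimum over a macroscopic region, the coefficient $1/\sigma(g)$ in Eq.~(\ref{eq:general_ODE}) becomes large there, which is exactly the focusing mechanism that makes trajectories cross (the two-dimensional caricature of this is the detailed-balance-breaking picture of Sec.~\ref{sec:toy_model}). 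Concretely I expect to show that for $c$ small the solution barely feels the source and lands near $\bar\rho(1)=\rho_{+1}$ plus a correction that is, say, increasing in $c$, while for $c$ in an intermediate window the strong source term over-rotates $g$ so that $G(c;s)$ overshoots and then, as $c$ increases further, comes back — producing the needed extra root for $s$ beyond some $s_*(\rho_{-1},\rho_{+1})$. Invoking continuity in $s$ and the $s=0$ uniqueness then gives at least one intermediate $s$ with a tangency (where two roots merge), and for $s$ just past it, two genuine roots — hence a profile with multiple solutions.

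The hard part will be making the large-$s$ multiplicity rigorous rather than heuristic: Eq.~(\ref{eq:general_ODE}) is a genuinely nonlinear second-order ODE with a non-autonomous forcing $\rho_f(x)$, and controlling $g(1;c,s)$ uniformly in $c$ is delicate near the ``turning'' values of $c$ where $g'$ nearly vanishes inside $(-1,1)$ (there Eq.~(\ref{eq:general_ODE}) becomes singular, $g''/(g')^2$ blowing up, which actually \emph{helps} create the fold but complicates estimates). I expect to handle this by a change of variables trading $x$ for $g$ as the independent variable on monotone arcs, reducing to a first-order equation for $p=g'$ as a function of $g$, for which comparison/monotonicity arguments are cleaner; the bump $w$ can be chosen piecewise-linear (or piecewise-constant in $\rho_f$) to make the integrals explicit. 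An alternative, possibly cleaner route is a variational one: show directly that the functional $\phi_{ext}$ in Eq.~(\ref{eq:quad_sig_LDF_expression}), restricted to the family $\rho_f^{(s)}$ for large $s$, has two distinct local minimizers in $g$ by exhibiting two admissible trial functions $g$ with a barrier between them — one ``short'' path staying near $\bar\rho$ and one that detours through the high-$\sigma$ region to cash in the favorable $-\ln(g'/\bar\rho')$ and $f$ contributions — and then use lower-semicontinuity/coercivity to promote them to genuine critical points. Either way, the remaining bookkeeping (verifying boundary conditions, that the critical points are distinct, and that the $s=0$ end is truly unique) is routine and relegated to the computation.
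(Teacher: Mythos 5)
Your proposal correctly identifies the mechanism (the focusing effect of the $1/\sigma(g)$ source term when $\rho_{f}$ excurses to both sides of the minimum of $\sigma$) and sets up a reasonable framework: a shooting function $G(c;s)=g(1;c,s)$ along a one-parameter family $\rho_{f}^{(s)}=\bar{\rho}+s\,w$, with uniqueness at $s=0$ and a continuity argument in $s$. But the entire content of the claim --- that \emph{some} profile admits more than one solution --- is deferred to the large-$s$ ``asymptotic / monotonicity analysis of $G$,'' which you yourself flag as the hard part and never carry out. The continuity-in-$s$ scaffolding and the $s=0$ uniqueness check buy nothing until multiplicity at large $s$ is actually established; as written, the argument reduces the claim to itself. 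A small sign that the heuristic is not yet under control: since Eq.~(\ref{eq:general_ODE}) gives $g''=(g')^{2}(g-\rho_{f})/\sigma(g)$, a small initial slope $c$ propagates as $g'\approx c$ throughout, so $G(c;s)\rightarrow\rho_{-1}$ (not $\rho_{+1}$) as $c\rightarrow0^{+}$, contrary to your description of the small-$c$ regime.

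The paper closes exactly this gap by an explicit construction. It takes $\rho_{f}$ piecewise constant, equal to $\rho_{A}$ on $(-1,0)$ and $\rho_{B}$ on $(0,1)$; on each constant piece Eq.~(\ref{eq:general_ODE}) integrates exactly to an implicit relation, yielding closed-form expressions for the matching slopes $g_{A}'(0)$ and $g_{B}'(0)$ as functions of the matching value $g(0)$. Global solutions are then precisely the crossings of the two curves $g_{A}'(0)$ and $g_{B}'(0)$ versus $g(0)$, and a controlled saddle-point expansion for $\rho_{A}\rightarrow+\infty$ with $\rho_{B}$ tuned to $\rho_{A}$ and a chosen $g_{c}\in(\rho_{-1},\rho_{1})$ (this is where $\rho_{-1}\neq\rho_{+1}$ enters) shows the curves cross three times: one crossing near $g_{c}$ with the correct orientation, plus two more forced near the endpoints where $g_{A}'$ and $g_{B}'$ respectively vanish. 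That quantitative step is what your plan is missing; your variational alternative (two trial functions $g$ separated by a barrier) would face the same difficulty of producing explicit competitors with verified action values. Note also that the paper's construction exploits that $\rho_{f}$ need not be continuous nor satisfy the boundary conditions, a freedom your smooth-bump ansatz does not use and which is what makes the exact integration possible.
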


\begin{proof}
Using the symmetries $\rho\rightarrow-\rho$ and $x\rightarrow-x$ it is enough
to consider the case $\rho_{-1}<\rho_{1}$, and $0<\rho_{1}$.

We proceed by an explicit construction of $\rho_{f}$. That is, given
$\rho_{\pm1}$ we construct a function $\rho_{f}\left(  x\right)  $ for which
Eq. (\ref{eq:general_ODE}) is satisfied by more than one function $g\left(
x\right)  $, which also satisfies the boundary-conditions. The profile
$\rho_{f}\left(  x\right)  $ will be a piecewise-constant function composed of
two flat regions, of the form%
\begin{equation}
\rho_{f}\left(  x\right)  =\left\{
\begin{array}
[c]{ccc}%
\rho_{A} &  & -1<x<0\\
\rho_{B} &  & 0<x<1
\end{array}
\right.  \ ,\label{eq:step_rho}%
\end{equation}
where $\rho_{A},\rho_{B}$ are (constant) numbers which specify $\rho
_{f}\left(  x\right)  $, see Fig. \ref{fig:many_sols}. Note that $\rho
_{f}\left(  x\right)  $ does not have to be continuous, nor to satisfy the
BCs, hence $\rho_{A},\rho_{B}$ are not restricted in any way. The solutions
$g\left(  x\right)  $ will be put together by solving Eq.
(\ref{eq:general_ODE}) for $x<0$ and $x>0$ separately (each with its
corresponding boundary condition), and matching the solutions by demanding
that $g\left(  x\right)  $ and $g^{\prime}\left(  x\right)  $ are continuous
at $x=0$.%
\begin{figure}
[ptb]
\begin{center}
\includegraphics[
height=1.6698in,
width=2.7245in
]%
{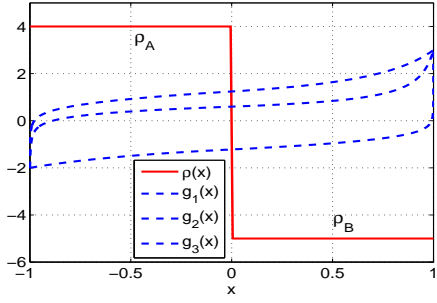}%
\caption{Density profile $\rho\left(  x\right)  $\ of the step form (Eq.
(\ref{eq:step_rho})), and three $g\left(  x\right)  $ solutions. Here
$\rho_{A}=4,\rho_{B}=-5,\rho_{-1}=-2,\rho_{1}=3$.}%
\label{fig:many_sols}%
\end{center}
\end{figure}

For a region with constant $\rho\left(  x\right)  =\bar{\rho}$, and given
$g\left(  x_{1}\right)  $, Eq. (\ref{eq:general_ODE}) has an (implicit)
analytic solution%
\begin{equation}
\int_{g\left(  x_{1}\right)  }^{g\left(  x\right)  }\frac{e^{\bar{\rho
}\operatorname{atan}\psi}}{\sqrt{1+\psi^{2}}}d\psi=c_{1}\left(  x-x_{1}%
\right)  \ ,\label{eq.analytic_sol}%
\end{equation}
where $c_{1}$ is a free constant. Differentiating both sides with respect to
$x$ we find%
\begin{equation}
g^{\prime}\left(  x\right)  \frac{e^{\bar{\rho}\operatorname{atan}g\left(
x\right)  }}{\sqrt{1+g\left(  x\right)  ^{2}}}=c_{1}%
\end{equation}
and using Eq. (\ref{eq.analytic_sol}) for $c_{1}$, $g^{\prime}\left(
x\right)  $ reads%
\begin{equation}
g^{\prime}\left(  x\right)  =\frac{\sqrt{1+g\left(  x\right)  ^{2}}}%
{e^{\bar{\rho}\operatorname{atan}\phi\left(  x\right)  }}\frac{1}{x-x_{1}}%
\int_{g\left(  x_{1}\right)  }^{g\left(  x\right)  }\frac{e^{\bar{\rho
}\operatorname{atan}\psi}}{\sqrt{1+\psi^{2}}}d\psi\ .
\end{equation}
Note that $c_{1}$ no longer appears in this equation. Instead, this is a
relation between $g\left(  x\right)  $ and $g^{\prime}\left(  x\right)  $. Let
$g_{A}\left(  x\right)  $ be the solution given in Eq. (\ref{eq.analytic_sol})
with $x_{1}=-1$, $g\left(  x_{1}\right)  =\rho_{-1}$ and $\bar{\rho}=\rho_{A}%
$:
\begin{equation}
\int_{\rho_{-1}}^{g_{A}\left(  x\right)  }\frac{e^{\rho_{A}\operatorname{atan}%
\psi}}{\sqrt{1+\psi^{2}}}d\psi=c_{A}\left(  x+1\right)  \ ,
\end{equation}
for $-1<x<0$. This defines a one-parameter family of solutions, according to
the value of $c_{A}$. Similarly, $g_{B}\left(  x\right)  $ is defined by
\begin{equation}
\int_{\rho_{1}}^{g_{B}\left(  x\right)  }\frac{e^{\rho_{B}\operatorname{atan}%
\psi}}{\sqrt{1+\psi^{2}}}d\psi=c_{B}\left(  x-1\right)
\end{equation}
for $0<x<1$. Any solution of Eq. (\ref{eq:general_ODE}) with $\rho\left(
x\right)  $ of the step form defined in Eq. (\ref{eq:step_rho}) is composed of
solutions $g_{A}\left(  x\right)  ,g_{B}\left(  x\right)  $ satisfying
$g_{A}\left(  0\right)  =g_{B}\left(  0\right)  $ and $g_{A}^{\prime}\left(
0\right)  =g_{B}^{\prime}\left(  0\right)  $. The derivatives at $x=0$ are
given by%
\begin{align}
g_{A}^{\prime}\left(  0\right)   &  =\frac{\sqrt{1+g_{A}\left(  0\right)
^{2}}}{e^{\rho_{A}\operatorname{atan}g_{A}\left(  0\right)  }}\int_{\rho_{-1}%
}^{g_{A}\left(  0\right)  }\frac{e^{\rho_{A}\operatorname{atan}\psi}}%
{\sqrt{1+\psi^{2}}}d\psi\ ,\nonumber\\
g_{B}^{\prime}\left(  0\right)   &  =\frac{\sqrt{1+g_{B}\left(  0\right)
^{2}}}{e^{\rho_{B}\operatorname{atan}g_{A}\left(  0\right)  }}\int
_{g_{B}\left(  0\right)  }^{\rho_{1}}\frac{e^{\rho_{B}\operatorname{atan}\psi
}}{\sqrt{1+\psi^{2}}}d\psi\ .\label{eq:dphiAB_expressions}%
\end{align}
We note that:

\begin{enumerate}
\item[(a)] $g^{\prime}\left(  x\right)  $ does not change sign. As we are
interested in solutions with $\rho_{-1}<\rho_{1}$, we only need to consider
solutions with $g^{\prime}\left(  x\right)  \geq0$.

\item[(b)] From (a) it follows that $\rho_{-1}\leq g_{A}\left(  0\right)
=g_{B}\left(  0\right)  \leq\rho_{1}$.

\item[(c)] It also follows that if $g_{A}\left(  0\right)  =\rho_{-1}$ then
$g_{A}^{\prime}\left(  0\right)  =0$, and if $g_{B}\left(  0\right)  =\rho
_{1}$ then $g_{B}^{\prime}\left(  0\right)  =0$. Similarly, if $g_{A}\left(
0\right)  =\rho_{1}$ then $g_{A}^{\prime}\left(  0\right)  >0$, and if
$g_{B}\left(  0\right)  =\rho_{-1}$ then $g_{B}^{\prime}\left(  0\right)  >0$.
\end{enumerate}

Consider now $g_{A}^{\prime}\left(  0\right)  $ and $g_{B}^{\prime}\left(
0\right)  $ as a function of $g\left(  0\right)  $. A solution $g\left(
x\right)  ~$on the entire segment $\left[  -1,1\right]  $ is obtained when
$g_{A}^{\prime}\left(  0\right)  =g_{B}^{\prime}\left(  0\right)  $ for the
same $g\left(  0\right)  $. Remark (c)\ ensures that they cross at least once;
But they may also cross more than once, see Fig. \ref{fig:phi_crossings}. The
number of crossings depends on $\rho_{A},\rho_{B}$. We will show that there
always exist $\rho_{A},\rho_{B}$ for which the graphs cross more than once.%
\begin{figure}
[ptb]
\begin{center}
\includegraphics[
height=2.0976in,
width=3.0659in
]%
{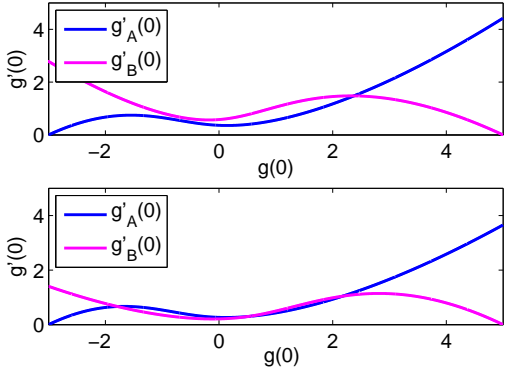}%
\caption{$g_{A}^{\prime}\left(  0\right)  $ and $g_{B}^{\prime}\left(
0\right)  $ plotted as functions of $g\left(  0\right)  $ for two functions
$\rho\left(  x\right)  $. In the upper pane the graphs cross only once,
indicating a single $g\left(  x\right)  $-solution. In the lower pane, done
with parameters of Fig. (\ref{fig:many_sols}), they cross three times,
resulting in three different $g$-solutions. (Note that the $g$-values at these
crossings indeed correspond to $g\left(  0\right)  $ of the solutions in Fig.
(\ref{fig:many_sols})). BCs for both panels are $\rho_{-1}=-3,\rho_{+1}=5$.
Upper panel: $\rho_{A}=3,\rho_{B}$ $=-2$, lower panel: $\rho_{A}=4,\rho_{B}$
$=-5$.}%
\label{fig:phi_crossings}%
\end{center}
\end{figure}

Motivated by the fact that the cusp singularities always appear at the lower
right corner of our phase-space cross-sections, we consider the limit where
$\rho_{A}$ is a large positive number, and $\rho_{B}\simeq-\rho_{A}$.

Denote by $g_{A}^{\prime}\left[  g\left(  0\right)  ;\rho_{A};\rho
_{-1}\right]  $ the value of $g_{A}^{\prime}\left(  0\right)  $ as a function
of $g\left(  0\right)  ,\rho_{A}$ and $\rho_{-1}$ , and similarly
$g_{B}^{\prime}\left[  g\left(  0\right)  ;\rho_{B};\rho_{1}\right]  $. The
analysis which follows is done for $g_{A}^{\prime}\left(  0\right)  $; similar
results are obtained for $g_{B}^{\prime}\left(  0\right)  $ since
$g_{B}^{\prime}\left[  g\left(  0\right)  ;\rho_{B};\rho_{1}\right]
=g_{A}^{\prime}\left[  -g\left(  0\right)  ;-\rho_{B};-\rho_{1}\right]  $. To
better understand $g_{A}^{\prime}\left[  g\left(  0\right)  ;\rho_{A}%
;\rho_{-1}\right]  $ at large $\rho_{A}$, we plot $g_{A}^{\prime}\left[
g\left(  0\right)  \right]  $ for $\rho_{A}=100$ and different $\rho_{-1}$
values, see Fig. \ref{fig:dphi_large_rho}. As can be seen, the different
graphs rise quickly from $g_{A}^{\prime}\left(  0\right)  =0$ at $g\left(
0\right)  =\rho_{-1}$, and join a common function.%
\begin{figure}
[ptb]
\begin{center}
\includegraphics[
height=1.6147in,
width=2.6398in
]%
{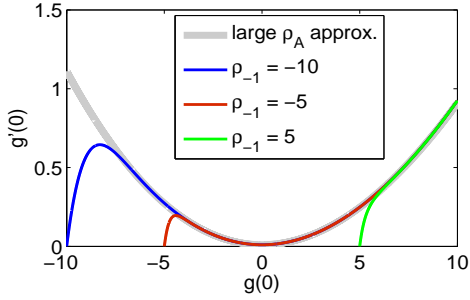}%
\caption{$g_{A}^{\prime}\left[  g\left(  0\right)  ;\rho_{A}=100;\rho
_{-1}\right]  $ for different $\rho_{-1}$. Gray line: large $\rho$-expansion,
Eq. (\ref{eq:dphiA_large_rho}).}%
\label{fig:dphi_large_rho}%
\end{center}
\end{figure}
This is formulated by the following Lemma:

\begin{lemma}
Expanding around $\rho_{A}\rightarrow\infty$, we have for any $g\left(
0\right)  >\rho_{-1}$
\begin{equation}
g_{A}^{\prime}\left[  g;\rho_{A};\rho_{-1}\right]  =\frac{1+g^{2}}{\rho_{A}%
}-\frac{g\left(  1+g^{2}\right)  }{\rho_{A}^{2}}+O\left(  \rho_{A}%
^{-3}\right)  \ \text{.}\label{eq:dphiA_large_rho}%
\end{equation}

\end{lemma}

\begin{proof}
Rewrite Eq. (\ref{eq:dphiAB_expressions}) for $g_{A}^{\prime}\left(  0\right)
$ as%
\[
g_{A}^{\prime}\left(  0\right)  =\sqrt{1+g^{2}}\int_{\rho_{-1}}^{g}%
\frac{e^{-\rho_{A}\left[  \operatorname{atan}g-\operatorname{atan}\psi\right]
}}{\sqrt{1+\psi^{2}}}d\psi\ ,
\]
where here and in the next equation $g$ stands for $g_{A}\left(  0\right)  $.
For $\rho_{A}\rightarrow\infty$ a saddle-point approximation can be preformed.
The exponent $-\rho_{A}\left[  \operatorname{atan}\phi-\operatorname{atan}%
\psi\right]  $ is dominated by small values of $g-\psi$, i.e. close to the
upper bound of the integral, and $\operatorname{atan}\phi-\operatorname{atan}%
\psi$ can be expanded to second order
\[
\operatorname{atan}g-\operatorname{atan}\psi=\frac{g-\psi}{1+g^{2}}%
+g\frac{\left(  g-\psi\right)  ^{2}}{\left(  1+g^{2}\right)  ^{2}}+O\left(
\left(  g-\psi\right)  ^{3}\right)  \ .
\]
In addition, the denominator $\left(  1+\psi^{2}\right)  ^{-1/2}$ is expanded
to second order around $g$. The resulting expression involves\ Gaussian
integrals which can be integrated, with the lower integration limit set to
$-\infty$. Finally, we expand the result (containing error-functions, etc.) to
second order in $1/\rho_{A}$ around $\rho_{A}\rightarrow\infty$, and obtain
Eq.\ (\ref{eq:dphiA_large_rho}).
\end{proof}

The expression in Eq. (\ref{eq:dphiA_large_rho}) does not depend on $\rho
_{-1}$, as expected from the reasoning alluding to Fig.
\ref{fig:dphi_large_rho}. Similarly, for $g_{B}^{\prime}\left[  g\left(
0\right)  ;\rho_{B};\rho_{1}\right]  $ we have, for $\rho_{B}\rightarrow
-\infty$,%
\begin{equation}
g_{B}^{\prime}\left[  g;\rho_{B};\rho_{1}\right]  =-\frac{1+g^{2}}{\rho_{B}%
}+\frac{g\left(  1+g^{2}\right)  }{\rho_{B}^{2}}+O\left(  \rho_{B}%
^{-3}\right)  \ \text{.}\label{eq:dphiB_large_rho}%
\end{equation}

We are now in a position to construct $\rho_{f}\left(  x\right)  $ with three
solution to $g$: given the BCs, choose some crossing value $g_{c}\in\left(
\rho_{-1},\rho_{1}\right)  $ (this is where the non-equilibrium condition
$\rho_{-1}\neq\rho_{1}$ enters). For a given $\rho_{A}$ the condition
$g_{A}^{\prime}\left[  g_{c};\rho_{A};\rho_{-1}\right]  =g_{B}^{\prime}\left[
g_{c};\rho_{B};\rho_{1}\right]  $, together with Eqs.
(\ref{eq:dphiA_large_rho}),(\ref{eq:dphiB_large_rho}) reads, to second order
in $\rho_{A}^{-1},\rho_{B}^{-1}$,%
\begin{equation}
\frac{1+g_{c}^{2}}{\rho_{A}}-\frac{g_{c}\left(  1+g_{c}^{2}\right)  }{\rho
_{A}^{2}}=-\frac{1+g_{c}^{2}}{\rho_{B}}+\frac{g_{c}\left(  1+g_{c}^{2}\right)
}{\rho_{B}^{2}}\ ,
\end{equation}
or%
\begin{equation}
\frac{1}{\rho_{A}}\left(  1-\frac{g_{c}}{\rho_{A}}\right)  =-\frac{1}{\rho
_{B}}\left(  1-\frac{g_{c}}{\rho_{B}}\right)  \ .
\end{equation}
Solving this for $\rho_{B}$ we find%
\begin{equation}
\rho_{B}=\frac{-1-\sqrt{1+4\frac{g_{c}}{\rho_{A}}\left(  1-\frac{g_{c}}%
{\rho_{A}}\right)  }}{2\left(  1-\frac{g_{c}}{\rho_{A}}\right)  }%
\ .\label{eq.rho_B_root}%
\end{equation}
\newline%
\begin{figure}
[ptb]
\begin{center}
\includegraphics[
height=2.5169in,
width=3.0286in
]%
{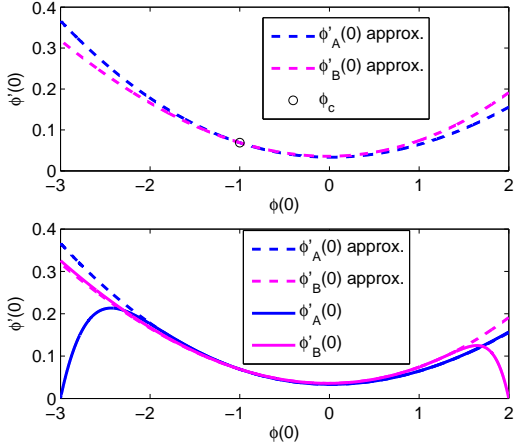}%
\caption{Constructing a solution. Choose a crossing value $g_{c}$ (here
$g_{c}=-1$) and some large $\rho_{A}$. $\rho_{B}$ is fixed so that the
crossing is approximately at $g_{c}$ (Upper pane). The full solutions will
feature this crossing with two more crossings, close to the boundaries.}%
\label{fig:design_cross}%
\end{center}
\end{figure}
Fixing $g_{c}$, the expansions Eqs. (\ref{eq:dphiA_large_rho}%
),(\ref{eq:dphiB_large_rho}) guarantee that as $\rho_{A}$ grows, with
$\rho_{B}$ given by Eq. (\ref{eq.rho_B_root}), a crossing point will appear in
a neighborhood of $g_{c}$, and approach $g_{c}$ for $\rho_{A}\rightarrow
\infty$. The root of the quadratic equation leading to Eq.
(\ref{eq.rho_B_root}) was chosen so that $g_{A}^{\prime},g_{B}^{\prime}$ will
cross from $g_{A}^{\prime}\left[  g\right]  >g_{B}^{\prime}\left[  g\right]  $
for $g<g_{c}$, to $g_{A}^{\prime}\left[  g\right]  <g_{B}^{\prime}\left[
g\right]  $ for $g>g_{c}$, see Fig. \ref{fig:design_cross}. This, together
with $g_{A}^{\prime}\left[  \rho_{-1};\rho_{A};\rho_{-1}\right]  =0$ and
$g_{B}^{\prime}\left[  \rho_{1};\rho_{B};\rho_{1}\right]  =0$, guarantees that
the functions $g_{A}^{\prime}\left[  g;\rho_{A};\rho_{-1}\right]  $ and
$g_{B}^{\prime}\left[  g;\rho_{B};\rho_{1}\right]  $, will have two crossing
points in addition to the crossing point near $g=g_{c}$. For large\ $\rho_{A}$
these will be at values of $g$ close to $\rho_{-1}$ and $\rho_{+1}$, see Fig.
(\ref{fig:design_cross}).
\end{proof}

\section{Cusp Structure and Hessian spectrum in the QS
model\label{sec:QS_cusp_ctructure}}

In this appendix we study the cusp structure, and the spectrum of the Hessian
matrix $H$\ at $\rho_{f}^{cusp}$, for the QS model. First, we prove that for
the QS model with profiles defined as in Appendix \ref{sec:appendix_quad_sig},
the framework of catastrophe theory is applicable. More precisely, there
exists an analytical function $F\left(  \rho_{A},\rho_{B}\right)  $\ on the
two-dimensional cross-section parametrized by $\left(  \rho_{A},\rho
_{B}\right)  $ as in Appendix \ref{sec:appendix_quad_sig}, and for which every
extremum of $F$ corresponds to a single extremal history leading to $\rho
_{f}\left(  \rho_{A},\rho_{B}\right)  $, as defined in Eq. (\ref{eq:step_rho}).

To construct the function $F$, we note that Eqs. (\ref{eq:dphiAB_expressions})
give us analytical expressions $g_{A}^{\prime}\left(  0\right)  =f_{A}\left(
g_{0},\rho_{A},\rho_{B}\right)  $ and $g_{B}^{\prime}\left(  0\right)
=f_{B}\left(  g_{0},\rho_{A},\rho_{B}\right)  $. We drop the $\rho_{-1}%
,\rho_{1}$ dependence, which are kept fixed. Let
\[
F\left(  g_{0};\rho_{A},\rho_{B}\right)  \equiv\left[  g_{A}^{\prime}\left(
0\right)  -g_{B}^{\prime}\left(  0\right)  \right]  ^{2}\ .
\]
$F\left(  \phi_{0};\rho_{A},\rho_{B}\right)  $ is analytic in all its
variables, and $F\left(  g_{0};\rho_{A},\rho_{B}\right)  =0$ iff the solution
$g$ is an extremal solution. Moreover, in the vicinity of the cusp, $\partial
F/\partial g_{0}=0$ iff $F\left(  g_{0};\rho_{A},\rho_{B}\right)  =0$, i.e. is
a local minimum as a function of $g_{0}$. Therefore $F$ acts as a
\textquotedblleft gradient map\textquotedblright\ (in the sense of Catastrophe
Theory), with $\rho_{A},\rho_{B}$ the control variables, and $g_{0}$ the state
variable. Accordingly, the cusp structure (regions in $\left(  \rho_{A}%
,\rho_{B}\right)  $ where $F\left(  g_{0}\right)  =0$ has two solutions) is
expected to be mean-field.

To show how $F$ is used, we briefly review the argument for the cusp
structure, which is essentially a Landau mean-field argument. The cusp point
is a special point $\left(  \rho_{A}^{cusp},\rho_{B}^{cusp}\right)  $ where at
the minimal $g_{0}$, $\partial^{2}F/\partial g_{0}^{2}=0=\partial
^{3}F/\partial g_{0}^{3}$ hold (the two conditions explain why it is a point,
or a set of isolated points, in the $\left(  \rho_{A},\rho_{B}\right)  $
plane). Let $x_{A,B}=\rho_{A,B}-\rho_{A}^{cusp}$. In the vicinity of the cusp,
expand $F$ to forth-order (\emph{here the analyticity is crucial}), $F=\alpha
g_{0}^{4}+\beta g_{0}^{3}+\gamma g_{0}^{2}+\delta$, where $\alpha,\beta
,\gamma,\delta$ depend on $x_{A,B}$. We assume that $\alpha\neq0$; vanishing
$\alpha$ would be non-generic, i.e., could be remedied by a small change in
any additional parameters, such as the BCs $\rho_{-1},\rho_{1}$ or the noise
function $\sigma$. Then a local change of variables can be performed to bring
$F$ to the form $F=\frac{1}{4}g_{0}^{4}+ag_{0}^{2}+b$, where at the cusp
$a=0=b$. The region where $F$ has two local minima is bounded by $b=\pm
\frac{2}{3^{3/2}}a^{3/2}$. Near the cusp $a,b$ can be expanded to first order
in $x_{A,B}$, so the power-law relation between $b\propto\pm a^{3/2}$ will
apply to a rotated frame of $x_{A,B}$.

\subsection{Spectrum\label{sec:QS_spectra}}

As discussed in the main text, for the \textquotedblleft Landau
mean-field\textquotedblright, catastrophe theory to hold, one must have a gap
in the spectrum of the Hessian $H$. This can be tested numerically, by
evaluating the action $S$ for the extremal solution $\rho\left(  x,t\right)  $
leading to $\rho_{f}\left(  x\right)  $, and calculating the Hessian, Eq.
(\ref{eq:Hessian_def}), by varying $\rho$ jointly at pairs of points $\left(
x_{1},t_{1}\right)  \ $and $\left(  x_{2},t_{2}\right)  $. The eigenvalues of
$H$ can then be calculated for different $\rho_{f}$ profiles close to
$\rho_{f}=\rho_{cusp}$.

To calculate $H$ one thus needs to locate $\rho_{cusp}$. This is most easily
done for the QS model with BCs $\rho_{L}=-\rho_{R}$, where $\rho_{cusp}$ of
the form of Eq. (\ref{eq:sin_interp}) must have $\alpha_{1}=0$. Fig.
\ref{fig:quad_sigma_spectrum_3_m3} shows the bottom of the spectra of $H$ for
different $\rho_{f}$, starting from $\bar{\rho}$ and ending at $\rho_{cusp}$.
One can clearly see a single eigenvalue going to zero, in agreement with the
analysis in the paper. The rest of the eigenvalues remain away from zero,
without closing the gap. This validates the analysis carried out in the main
text for the QS model.%
\begin{figure}
[ptb]
\begin{center}
\includegraphics[
height=2.7457in,
width=3.0193in
]%
{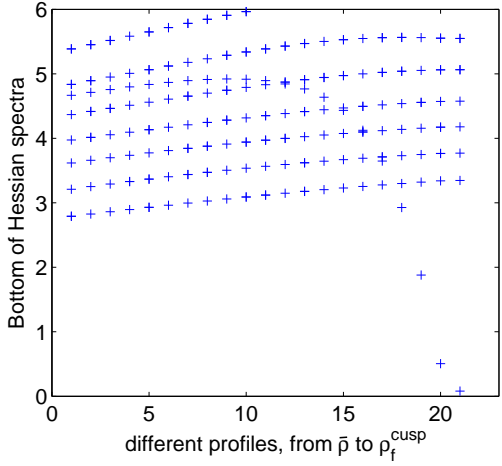}%
\caption{Bottom of Hessian spectrum for the QS model at different $\rho_{f}$
profiles, equaly spaced between $\bar{\rho}$ (leftmost) to $\rho_{f}^{cusp}$
(rightmost). BCs are $\rho_{L}=-3,\rho_{R}=3$. A single eigenvalue approaches
zero, while the gap above it is maintained.}%
\label{fig:quad_sigma_spectrum_3_m3}%
\end{center}
\end{figure}

\section{Cusp structure in the BDI model\label{sec:appendix_KLS}}

In this Appendix we check the validity of Eq. (\ref{eq:S_of_xsi}), which
predicts the structure of the cusp. In the BDI model the diagonalization of
the Hessian $H$ gave inconclusive results. We suspect that this is due to the
difficulty of locating $\rho_{cusp}$ with high precision in this model. As we
show now, the predictions of Sec. \ref{sec:cusp_structure} hold only very
close to $\rho_{f}^{cusp}$, when $\left\Vert \rho_{f}-\rho_{f}^{cusp}%
\right\Vert <10^{-2}$.

To compare Eq.\ (\ref{eq:S_of_xsi}) with numerics, it is more convenient to
use a different form, which does not require knowing the precise position of
$\rho_{f}^{cusp}$. Noting that $s_{\left(  a,0\right)  }\left(  \Delta\right)
=s\left[  \rho_{1}\right]  $ and $s_{\left(  a,0\right)  }\left(
-\Delta\right)  =s\left[  \rho_{2}\right]  $, we find that $\Delta\propto
\sqrt{a}$. Therefore we expect that for $b=0$%

\begin{equation}
s_{q}\left(  \Delta\right)  -s_{q=0}\left(  \Delta\right)  \propto\Delta
^{4}\left(  \frac{1}{4}y^{4}+\frac{1}{2}y^{2}\right)  \ , \label{eq:s_of_y}%
\end{equation}
where $y=q/\Delta$.

As an example we consider the boundary-driven Ising model, with $\left(
\varepsilon,\delta\right)  =\left(  0.05,0.995\right)  $ and $\rho
_{L}=0.2,\rho_{R}=0.8$. Examples of pairs of locally minimizing histories
leading to configurations on the switching line are shown in Fig.
\ref{fig:on_the_switch}.%
\begin{figure}
[ptb]
\begin{center}
\includegraphics[
height=1.9502in,
width=2.5779in
]%
{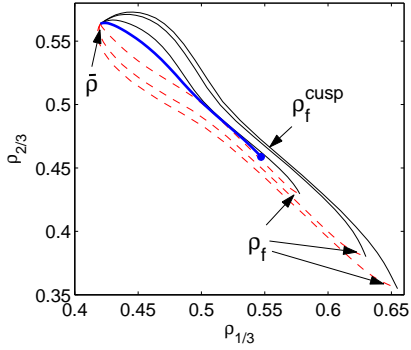}%
\caption{Pairs of locally minimizing histories, leading to points on the
switching line (solid and dashed lines). The history leading to $\rho
_{f}^{cusp}$ (circle)\ is also plotted (bold line).}%
\label{fig:on_the_switch}%
\end{center}
\end{figure}

Fig. \ref{fig:quartic_polynomials_fit}(a) shows the function $\left[
s_{q}\left(  \Delta\right)  -s_{q=0}\left(  \Delta\right)  \right]
/\Delta^{4}$ as a function of $y$ at $a\sim10^{-2}$, together with a quartic
fit, which shows clear deviations from this form. This means that even at this
distance $\rho_{f}-\rho_{f}^{cusp}\sim10^{-2}$ there are significant
contributions of higher powers to $S_{q}$. Due to these higher powers plotting
$\left[  s_{q}\left(  \Delta\right)  -s_{q=0}\left(  \Delta\right)  \right]
/\Delta^{4}$ vs. $y$ for different $a$ values in the range $5\cdot10^{-3}\leq
a\leq0.2$ does not collapse the data as expected. We therefore fit the
functions $s_{q}\left(  \Delta\right)  -s_{q=0}\left(  \Delta\right)  $ to
polynomials of order four and higher, and plot $c_{4}$, the prefactor of
$q^{4}$, as a function of $\Delta^{2}$, see Fig.
\ref{fig:quartic_polynomials_fit}(b). The expected power-law is not obtained
for a quartic fit, but improves when the fits include higher order terms, see
Fig. \ref{fig:c4_fits_both}(a). This means that $c_{4}$ in Eq.
(\ref{eq:S_of_xsi}) can indeed be taken to be constant. Finally, one can also
fit $s_{q}\left(  \Delta\right)  -s_{q=0}\left(  \Delta\right)  \propto
a^{2}\left(  \frac{1}{4}y^{4}+\frac{c}{2}y^{2}\right)  $, by fitting the
position of $\rho_{f}^{cusp}$, see Fig. \ref{fig:c4_fits_both}(b).%
\begin{figure}
[ptb]
\begin{center}
\includegraphics[
trim=0.000000in 0.008839in 0.000000in -0.008839in,
height=1.4673in,
width=3.509in
]%
{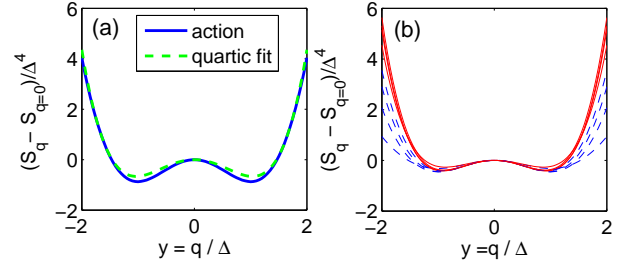}%
\caption{(a) The function $S_{q}\left(  \Delta\right)  -S_{q=0}\left(
\Delta\right)  /\Delta^{4}$ \ as a function of $y$ at $a=0.012$ (solid line).
A clear deviation is seen from a fit to a quartic function (dashed line). (b)
The function $S_{q}\left(  \Delta\right)  -S_{q=0}\left(  \Delta\right)
/\Delta^{4}$ for different values of $a$ (dashed lines). Fitting the functions
to a polynomial of power 8, and plotting only the quartic part, the collapse
improves significantly (solid lines).}%
\label{fig:quartic_polynomials_fit}%
\end{center}
\end{figure}
\begin{figure}
[ptb]
\begin{center}
\includegraphics[
height=1.6969in,
width=3.686in
]%
{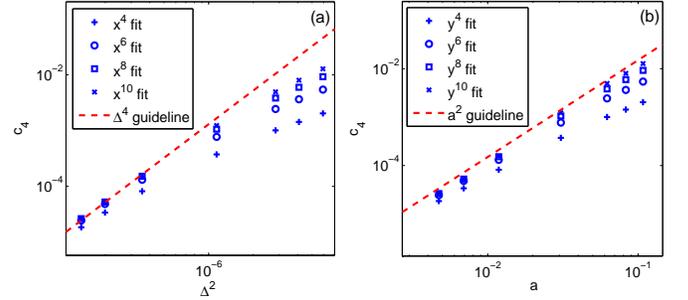}%
\caption{Fits of $c_{4}$. (a) $c_{4}$ as a function of $\Delta^{2}$. (a)
$c_{4}$ as a function of $a$. Guidelines (dashed lines) represent the expected
slope of the functions. Extracting $c_{4}$ from fits that also include higher
powers one obtains data that better fits the expected slope.}%
\label{fig:c4_fits_both}%
\end{center}
\end{figure}

The presence of strong higher powers as close as $a\sim10^{-2}$, see e.g. Fig.
\ref{fig:c4_fits_both}(b), can be understood as follows: the non-linear terms
come from the different action at the two paths $\rho_{1}$ and $\rho_{2}$. As
the distance $\Delta$ between the, scales as $\Delta\propto\sqrt{a}$,
$\rho_{1}-\rho_{2}$ at some space time point, can be of order $10^{-1}$ even
for $a\sim10^{-2}$, and the two minimizing paths can see very different
behavior of $D\left(  \rho\right)  ,\sigma\left(  \rho\right)  $ along the
paths. This sensitivity explains why computing $H$ directly is difficult: one
needs $\Delta$ to be small, but as $a\propto\Delta^{2}$, one needs the
distance $a$ from $\rho_{f}^{cusp}$ to be very small.

From the above we conclude that at the cusp there is a \emph{soft mode},\ a
direction along which the minimizing history $\rho^{cusp}\left(  x,t\right)  $
has a zero second derivative: $d^{2}s\left[  \rho^{cusp}\left(  x,t\right)
+qu_{a=0}\left(  x,t\right)  \right]  /dq^{2}=\left[  d^{2}s_{q}%
/dq^{2}\right]  _{a=0}=0$. This is indirect evidence that $H$ has at least one
vanishing eigenvalue.

\end{document}